%
%
%
%
%
%
%
%
%
%
%
%

\documentclass{fundam}


\setcounter{page}{1}
\publyear{2023}
\papernumber{1}
\volume{1}
\issue{1}



\usepackage{url} 
\usepackage[linesnumbered,ruled,lined]{algorithm2e}
\usepackage{graphicx}
\usepackage{tikz}
\usetikzlibrary{automata, positioning, arrows}

\usetikzlibrary{decorations.pathreplacing}

\begin{document}

\title{Compatibility graphs in scheduling on batch processing machines}

\author{Khaoula BOUAKAZ and Mourad BOUDHAR\\
	RECITS laboratory, Faculty of Mathematics, USTHB University, 
	 Algiers, Algeria\\
khaoulabouakaz19@gmail.com\\ mboudhar@yahoo.fr}

\address{RECITS laboratory, Faculty of Mathematics, USTHB University, BP 32 El-Alia, Bab Ezzouar, Algiers, Algeria}

\maketitle

\runninghead{K. BOUAKAZ and M. BOUDHAR}{Compatibility graphs in scheduling on batch processing machines} 

\begin{abstract} We consider the problem of minimizing the makespan on batch processing identical machines, subject to compatibility constraints, where two jobs are compatible if they can be processed simultaneously in a same batch. These constraints are modeled by  an undirected graph $G$, in which compatible jobs are represented by adjacent vertices. We show that several subproblems are polynomial. We propose some exact polynomial algorithms to solve these subproblems. To solve the general case, we propose a mixed-integer linear programming (MILP) formulation alongside with  heuristic approaches. Furthermore, computational experiments are carried out to measure the performance of the proposed methods.
\end{abstract}

\begin{keywords} batch scheduling, makespan, compatibility graph, complexity, matching
\end{keywords}

\section{Introduction}

In classical scheduling theory, it is assumed that machines can only process one job at a time.
However, in reality there is another scheduling model that can be called "batch machines" and
that refers to batches of jobs to be processed together (material painted together, material rolled together, material transported together, etc). In these cases, the jobs in the same batch have to be compatible.

The principal motivation for batch scheduling is the scheduling of burn-in operations in the
semiconductor industry where are exposed to high temperatures in a fixed capacity oven in order
to weed out chips susceptible to premature failure.

In this work, we consider the problem of scheduling a set $J=\lbrace J_1,...,J_n \rbrace$ of $n$ jobs  non-preemptively on batch processing identical machines to minimize the makespan $C_{max}$. We assume that the jobs are subject to compatibility constraints modeled by an undirected graph $G=(V, E)$, we call the compatibility graph, in which each job is represented by a vertex $(V=J)$, and each edge joins a pair of jobs that can be processed simultaneously in a same batch. By definition, a batch belongs to a clique of the graph $G$. The capacity $b$ of a batch and hence the clique size, may be finite $b = k$, variable, or infinite $b = \infty$ (it can process all jobs simultaneously). The job $J_i$ has the processing time $p_i$
and the processing time of a batch is equal to the maximum processing time or the sum of processing times of  jobs assigned to it. All the jobs in a batch must be available at the same date, start at the same date and finish at the same date. We assume that a setup-time $s$ must separate two successive batches and that the jobs of a batch are processed without setup-time. The setup-time is independent of the batches sequence, it is identical between each two successive batches on all machines.

A feasible schedule is an assignment of the jobs to the machines where  all the jobs in a same batch are adjacent in the graph  $G$. In a given schedule, the completion time of a batch $B_i$, denoted by $C_i$, is defined as the time by which the machine completes its processing. The largest one among them defines the makespan of the schedule $C_{max}$. We denote these batch scheduling problems with a fixed number of machines $m$ by $Bm,max| G = (V,E), b, p_i, s| C_{max}$ and $Bm,sum| G = (V,E), b, p_i,s|C_{max}$.

This paper is organized as follows. Section 2 is about the literature review.  In section 3,  we study the complexity results. Linear formulation  will be given  in Section 4. In Section 5, we present two heuristic approaches. Section 6 reports the computational results. We conclude this article in Section 7.

\section{Literature review}

The theory of batch scheduling for various scheduling objectives and additional constraints has been already well established in \cite{Reference6, Reference24}. Some very special cases of job compatibility, the Incompatible Families structures \cite{Reference10, Reference17, Reference19, Reference26} and the Compatible Families structures \cite{Reference6, Reference7, Reference14, Reference17} have previously been treated. Intensive research has subsequently been developed on this subject for various scheduling objectives and additional constraints, see for instance the surveys \cite{Reference6}. \cite{Reference20} provided a survey of scheduling research in semiconductor manufacturing. In \cite{Reference16}, the authors presented  the literature on parallel batching and  focused primarily on deterministic scheduling. They provided a taxonomy of parallel batching problems, distinguishing the compatible case  and the incompatible families.  They discussed scheduling approaches for single machine, parallel machines, and other environments such as flow shops and job shops.

For the case of single batch processing machine  with a general compatibility graph $G$ has been analyzed in \cite{Reference5}. In the same reference the authors have proved that the problem with an arbitrary compatibility graph $G$,  denoted $B1|G = (V,E), b = 2|C_{max}$ can be solved in polynomial time of order $O(n^3)$ as a maximum weighted matching problem. Note that the sub-problem with an identical processing time denoted $B1|G = (V,E), b = 2, p_i = 1|C_{max}$ can be solved in polynomial time of order $O(n^{2.5})$ as a maximum cardinality matching problem. The special cases with bipartite compatibility graph and split compatibility graph have been analyzed respectively in \cite{Reference3,Reference2,Reference100}. In \cite{Reference2}, each job has a release date and a processing time equal to 1. \cite{Reference13} studied the scheduling  on a single max-batch machine  with variable capacity, each job has a minimal processing time and the compatibility constraint may be represented by an interval graph. They considered several models with varying batch capacities, processing times or compatibility graphs. They summarized known results, and presented a min–max formula and polynomial time algorithms. 

In \cite{Reference15} is studied the case of  the  serial batching and  parallel batching. They did not introduce the notion of compatibility graphs. They developed  exact algorithms to minimize makespan on single and parallel batch processing machines for the problem  $1|s_j, B|C_{max}$ and $Pm|s_j, B|C_{max}$ where each job $j$  has a size $s_j$ and a minimum processing time $p_j$, which does not differ across the machines. Then, a subset of jobs that can be processed simultaneously forms a batch under the condition that their total size is no larger than machine capacity $S$. In \cite{Reference25} is considered the online (over time) scheduling problem of minimizing the makespan on $m$ unbounded parallel-batch machines with a compatibility constraint  which is represented by an interval compatibility graph. They  provided that there exists no online algorithm with a competitive ratio less than 2 and an online algorithm with a competitive ratio $2+(m-1)/(m+1)$, which is optimal for the case $m = 1$. When all jobs have the same processing times, they also gave an optimal online algorithm. Based on previous studies that often combine machine learning algorithms with practical production applications. \cite{Reference27} provided a new approach to the scheduling process using an advanced genetic algorithm.

For the case of ﬂowshop batching machines, \cite{Reference23} studied the no-wait flowshop problem with two batching machines, and proposed a polynomial algorithm for the above problem. They also extended their studies to the case of $m$ batching machines.  In \cite{Reference1} is considered a two-stage hybrid flowshop problem in which the first stage contains several identical discrete machines, and the second stage contains several identical batching machines. They studied the case of an interval graph when  which jobs have the same processing time on the first stage, for which a polynomial approximation scheme algorithm is presented. \cite{Reference22} studied  the problem of job scheduling in a flowshop with two  machines. The first machine is a discrete machine and the second machine is a batching machine  with the additional feature that the jobs of the same batch have to be compatible. A compatibility constraint  is defined by an  interval graph. They studied the complexity of the makespan minimization. They developed a heuristic approach and evaluated it. Other related results for the case of flowshop batching machines were presented by \cite{Reference18, Reference21}.

\section{Complexity results}

We, here, briefly present  new results regarding the  problems $B1, max|G=(V, E), b=2, s|C_{max}$ and  $ B1, sum|G=(V, E), b=2, s|C_{max}$ for the case of a single machine and $Bm, max|G=(V, E), b=2, p_i=p, s|C_{max}$, $Bm, sum|G=(V, E), b=2, p_i=p, s|C_{max}$ and $B2, max| G=(V, E), b=2, p_i=\lbrace p, q\rbrace, s| C_{max}$ for the case of several machines. These  problems  with an arbitrary compatibility graph $G$  can be solved in polynomial time.

\subsection{Case of  single machine}
We show in this section that  the problem of scheduling a set of $n$ jobs  $J_1, . . . , J_n$
non-preemptively on a single batch processing   machine  to minimize the makespan is polynomial. Here the jobs are subject to compatibility constraints modeled by an undirected graph $G$, each batch has a capacity  $b=2$, each job $J_i$ has a processing time $p_i$ and there exist a setup-time $s$ between each two successive  batches.

\begin{theorem}
The problem $B1, max|G=(V, E), b=2, s|C_{max}$ reduces to the maximum weighted matching.
\end{theorem}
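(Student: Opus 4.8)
The plan is to exploit the very restrictive structure imposed by $b=2$: every batch is either a single job or a pair of jobs $\{J_i,J_j\}$ with $(i,j)\in E$, and since each job lies in exactly one batch, the collection of two-job batches is precisely a \emph{matching} $M$ in $G$. Thus a feasible schedule is, up to the ordering of batches, nothing more than a choice of matching $M$. First I would record the order-independence observation: on a single machine the makespan is just the sum of all batch processing times plus the setup times inserted between consecutive batches, and this total does not depend on the sequence in which the batches are run. So the only decision that matters is which compatible pairs to merge, i.e.\ the choice of $M$.

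Next I would express $C_{max}$ explicitly as a function of $M$. A matching of size $|M|$ produces $|M|$ two-job batches and $n-2|M|$ singleton batches, hence $k=n-|M|$ batches and $k-1=n-|M|-1$ setups. Using $\max(p_i,p_j)=p_i+p_j-\min(p_i,p_j)$ for each pair, the total processing time collapses to $\sum_i p_i-\sum_{(i,j)\in M}\min(p_i,p_j)$, so that
\begin{equation*}
C_{max}(M)=\sum_{i}p_i-\sum_{(i,j)\in M}\min(p_i,p_j)+\bigl(n-|M|-1\bigr)s
=\Bigl[\sum_i p_i+(n-1)s\Bigr]-\sum_{(i,j)\in M}\bigl[\min(p_i,p_j)+s\bigr].
\end{equation*}

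The bracketed quantity $\sum_i p_i+(n-1)s$ is a constant independent of the schedule, so minimizing $C_{max}$ is equivalent to maximizing $\sum_{(i,j)\in M}\bigl[\min(p_i,p_j)+s\bigr]$ over all matchings $M$ of $G$. I would therefore assign to each edge $(i,j)\in E$ the weight $w(i,j)=\min(p_i,p_j)+s$, observe that these weights are nonnegative, and conclude that an optimal schedule corresponds exactly to a maximum weighted matching under $w$, which is solvable in polynomial time (e.g.\ $O(n^3)$, as cited for the related single-machine result). The main point requiring care is the bookkeeping of the setup times: one must verify that merging a pair saves \emph{both} $\min(p_i,p_j)$ in processing time \emph{and} one setup $s$, which is exactly what makes $s$ appear additively inside the edge weight; getting the batch count $k=n-|M|$ and the setup count $k-1$ right is the only genuinely delicate step, the rest being a direct algebraic rearrangement.
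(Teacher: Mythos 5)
Your proposal is correct and follows essentially the same route as the paper: identify the two-job batches with a matching $M$, use $\max(p_i,p_j)=p_i+p_j-\min(p_i,p_j)$ to rewrite $C_{max}$ as the constant $\sum_i p_i+(n-1)s$ minus the weight of $M$ under $w(i,j)=\min(p_i,p_j)+s$, and reduce to maximum weighted matching. The only cosmetic difference is that the paper starts from an arbitrary feasible schedule $\sigma$ and derives the same formula, whereas you parametrize directly by $M$; the content is identical.
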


\begin{proof}
Let $\sigma$ be a feasible schedule, it is composed of two-job batches and single-job batches. The two-job batches correspond to a matching $M$ in graph $G$ and the single-job batches are those of the set $V\setminus V_{M}$ ($V_{M}$ represents the  jobs of the matching $M$)  and contribute to $\sigma$ by a time $K=\sum\limits_{l/J_l \in  V\setminus V_{M}} p(J_l)$.

Let $c(e_{k})= \max\lbrace p(J_{i}),p(J_{j})\rbrace$ be the cost of the edge $e_k$, if the edge $e_k$ is incident to the vertices $J_i$
and $J_j$. The makespan
is written: $C_{max}(\sigma)$=$ \sum\limits_{k/e_{k}\in M} c(e_{k}) +K+s(\vert V\vert-\vert M\vert -1)$, where the expression $s(\vert V\vert-\vert M\vert -1)$ represents the global setup-time of $\sigma$.

We have
$$K=\sum_{l/ J_l \in V}p(J_l)-\sum_{l/J_{l} \in V_{M}}p(J_{l})
=\sum_{l/ J_l \in V}p(J_l)-\sum_{k/e_{k}=\left(  J_{i},J_{j}\right) \in M}\left(p(J_{i})+p(J_{j})\right)$$

Since $\vert M\vert=\sum_{k/e_{k}\in M}1$, then we obtain:

\begin{align*}
 C_{max}(\sigma) &=\sum_{k/e_{k}=( J_{i},J_{j})\in M} \max\lbrace p(J_{i}),p(J_{j})\rbrace +\sum_{l/J_l \in V}p(J_l) -\sum_{k/e_{k}=( J_{i},J_{j}) \in M}(p(J_{i})+p(J_{j}))\\
&\ \ \ +s\left(n-\sum_{k/e_{k}\in M}1-1 \right)\\
 &= \sum_{k/e_{k}=( J_{i},J_{j})\in M}\left( \max\lbrace p(J_{i}),p(J_{j})\rbrace-p(J_{i})-p(J_{j})-s\right) +\sum_{l/J_l \in V}p(J_l)+s(n-1)\\
 &= \sum_{k/e_{k}=( J_{i},J_{j})\in M}-\left( \min\lbrace p(J_{i}),p(J_{j})\rbrace+s\right) +\sum_{l/J_l \in V}p(J_l)+s(n-1).
\end{align*}
	
We consider a new weight function $\alpha$ on the graph $G$ such that for each edge $e_{k}=(J_{i},J_{j})$, $\alpha (e_{k})=  \min\lbrace p(J_{i}), p(J_{j})\rbrace+s$, then the makespan is written :

$$C_{max}(\sigma)= \sum_{l/J_l \in V}p(J_l)+s(n-1)-\sum_{k/e_{k}\in M}\alpha (e_{k})$$

$\sum\limits_{l/J_l \in V}p(J_l)+s(n-1)$ is a constant. Then, minimizing $C_{max}$ is equivalent to maximizing $\sum\limits_{k/e_{k}\in M}\alpha (e_{k})$.

Hence, the optimal solution of the problem is obtained by finding a  maximum weighted matching $M$ in the valued graph $H_{\alpha}=(G,\alpha)$.
\end{proof}

Algorithm \ref{alg1} solves the problem $B1, max|G=(V, E), b=2, s|C_{max}$.

\begin{algorithm}[h!]
\caption{}\label{alg1}
\KwIn{$G=(V, E)$, $p_i$, $s$} 
\KwResult{ schedule $\sigma$}
\begin{enumerate}
\item From the graph $G=(V, E)$, construct a new valued graph $H_{\alpha}=(G,\alpha)$ where each edge $e=(J_i, J_j)\in E$ is valued by $\alpha (e)=\min\lbrace p(J_{i}), p(J_{j}) \rbrace +s$.
\item  Find a maximum weighted matching $M$ in the graph  $H_{\alpha}$.
\item  Form the batches of $\sigma$:
\begin{itemize}
\item[•]For each edge of the matching $M$, process the corresponding two jobs in a same
batch.
\item[•]The other jobs are processed in single job batches. (Schedule
the batches in an arbitrary order).
\end{itemize}
\item    $C_{max}(\sigma)= \sum_{l/J_l \in V}p(J_l)+s(n-1)-\sum_{k/e_{k}\in M}\alpha (e_{k}).$
\end{enumerate}
\end{algorithm}

The best known algorithm for the maximum weighted matching is in $O(n^{3})$. Hence, also the Algorithm \ref{alg1} runs in $O(n^3)$.

If all the processing times are identical  to $p$, $\alpha(e)=p+s$ for all $e \in E$. The problem 
$B1, max|G=(V, E), b=2, pi=p, s|C_{max}$ is solved by finding a  maximum cardinality  matching  in the  graph $G$ with complexity $O(n^{2.5})$.

\begin{theorem}
The problem $B1, sum|G=(V, E), b=2, s|C_{max}$ reduces to the maximum cardinality matching.
\end{theorem}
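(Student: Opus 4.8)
The plan is to follow the template of the previous proof, while noticing that in the sum-model the cost structure collapses to something even simpler. First I would take an arbitrary feasible schedule $\sigma$ and decompose it into two-job batches, which form a matching $M$ in $G$, together with single-job batches on the vertices of $V\setminus V_M$. The decisive difference from the max-case is that here a two-job batch $(J_i, J_j)$ contributes $p(J_i)+p(J_j)$ to the makespan instead of $\max\lbrace p(J_i), p(J_j)\rbrace$.

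Next I would compute the total processing time over all batches. Since every job lies in exactly one batch and contributes its full processing time $p(J_l)$ whether it sits in a two-job batch or in a single-job batch, the sum of all batch processing times equals $\sum_{l/J_l \in V} p(J_l)$, a constant that does not depend on $M$. This is the key simplification: unlike in Theorem 1, the processing-time part of the objective carries no information about the chosen matching.

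It then remains to account for the setup time. A schedule based on $M$ consists of $|M|$ two-job batches and $n-2|M|$ single-job batches, for a total of $n-|M|$ batches and hence $n-|M|-1$ setups. Therefore
$$C_{max}(\sigma) = \sum_{l/J_l \in V} p(J_l) + s(n-1) - s|M|.$$
Since $\sum_{l/J_l \in V} p(J_l)+s(n-1)$ is a constant and $s>0$, minimizing $C_{max}$ is equivalent to maximizing $|M|$, that is, to computing a maximum cardinality matching in $G$.

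I do not expect a genuine obstacle in this argument; the only delicate point is the book-keeping of the number of batches and the resulting number of setups. The essential observation is that the sum-model, in contrast to the max-model, leaves all edge weights uniform, so that the weighted matching of Theorem 1 degenerates into a pure cardinality problem. This degeneration is also what yields the improved running time: a maximum cardinality matching can be computed in $O(n^{2.5})$ rather than $O(n^3)$.
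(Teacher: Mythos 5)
Your proposal is correct and follows essentially the same route as the paper: the same decomposition of a feasible schedule into a matching $M$ of two-job batches plus singletons, the same observation that the total processing time $\sum_{l/J_l\in V}p(J_l)$ is independent of $M$, and the same count of $n-|M|-1$ setups leading to $C_{max}(\sigma)=\sum_{l/J_l\in V}p(J_l)+s(n-1)-s|M|$. The only difference is presentational: you note the constancy of the processing-time sum directly, whereas the paper derives it by explicit cancellation.
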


\begin{proof}
Let $\sigma$ be a feasible schedule, it is composed of two-job batches and single-job batches. The two-job batches correspond to a matching $M$ in graph $G$ and the single-job batches are those of the set $V\setminus V_{M}$ and contribute to $\sigma$ by a time $K=\sum_{l/J_l \in  V\setminus V_{M}}p(J_l)$.

Let $c(e_{k})= p(J_{i})+p(J_{j})$ be the cost of the edge $e_k$, if the edge $e_k$ is incident to the vertices $J_i$
and $J_j$. The makespan
is written: $C_{max}(\sigma)$=$ \sum\limits_{k/e_{k}\in M} c(e_{k}) +K+s(\vert V\vert-\vert M\vert -1)$, where the expression $s(\vert V\vert-\vert M\vert -1)$ represents the global setup-time of $\sigma$.

We have
$$K=\sum_{l/J_l \in V}p(J_l)-\sum_{l/J_{l} \in V_{M}}p(J_{l})
=\sum_{l/J_l \in V}p(J_l)-\sum_{k/e_{k}=(  J_{i},J_{j}) \in M}(p(J_{i})+p(J_{j}))$$

Since $\vert M\vert=\sum_{k/e_{k}\in M}1$, then we obtain:

\begin{align*}
C_{max}(\sigma) & =\sum_{k/e_{k}=( J_{i},J_{j})\in M} ( p(J_{i})+p(J_{j})) +\sum_{l/J_l \in V}p(J_l) -\sum_{k/e_{k}=( J_{i},J_{j}) \in M}(p(J_{i})+p(J_{j}))\\
&\ \ \ +s\left(n-\sum_{k/e_{k}\in M}1 -1\right)\\
&= \sum_{k/e_{k}=( J_{i},J_{j})\in M}\left( p(J_{i})+p(J_{j})-p(J_{i})-p(J_{j})-s\right) +\sum_{l/J_l \in V}p(J_l)+s(n-1)\\
&= \sum_{l/J_l \in V}p(J_l)+s(n-1)-s\sum_{k/e_{k}=( J_{i},J_{j})\in M}1
\end{align*}
	
$\sum\limits_{l/J_l \in V}p(J_l)+s(n-1)$ is a constant. Then, minimizing $C_{max}$ is equivalent to maximizing $\sum\limits_{k/e_{k}\in M}1$.
Hence, the optimal solution of the problem is obtained by finding a  maximum cardinality matching $M$ in the  graph $G$.
\end{proof}

Algorithm \ref{alg2} solves the problem $B1, sum|G=(V, E), b=2, s|C_{max}$.

\begin{algorithm}[h!]
\caption{}\label{alg2}
\KwIn{$G=(V, E)$, $p_i$, $s$} 
\KwResult{ schedule $\sigma$}
\begin{enumerate}
\item  Find a maximum cardinality  matching $M$ in the graph  $G$.
\item  Form the batches of $\sigma$:
\begin{itemize}
\item[•]For each edge of the matching $M$, process the corresponding two jobs in a same
batch.
\item[•]The other jobs are processed in single job batches. (Schedule the batches in an arbitrary order).
\end{itemize}
\item  $C_{max}(\sigma)= \sum_{l/J_l \in V}p(J_l)+s(n-\vert M \vert-1)$.
\end{enumerate}
\end{algorithm}

The best known algorithm for the maximum cardinality matching is in $O(n^{2.5})$. Hence, also the Algorithm \ref{alg2} runs in $O(n^{2.5})$.

\subsection{Case of several machines }

We show in this section that  the scheduling problems $Bm, max| G=(V, E), b=2, p_i=p, s|C_{max}$,  $Bm, sum| G=(V, E), b=2, p_i=p, s|C_{max}$ and  $B2, max| G=(V, E), b=2, p_i=\lbrace p,q\rbrace, s|C_{max}$ are  solvable in polynomial time. Where the jobs are subject to compatibility constraints modeled by an undirected graph $G$, each batch has a capacity  $b=2$,  all the processing times are identical  to $p$  or to two values $\lbrace p, q \rbrace$  where $p<q$  and there exist  a setup-time $s$ between each two successive batches. The problem  $Bm| G = (V, E), b=1| C_{max}$ is equivalent to the parallel machine problem
$Pm|| C_{max}$. Consequently, these batch scheduling problems are NP-hard for $m\geq 2$.

\begin{theorem}
The problem $Bm, max|G=(V, E), b=2, p_i=p, s|C_{max}$ reduces to the maximum cardinality matching.
\end{theorem}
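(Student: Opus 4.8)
The plan is to mirror the single-machine argument of the preceding theorems, but to add a load-balancing step across the $m$ machines. First I would observe that, exactly as before, any feasible schedule $\sigma$ decomposes into two-job batches, which form a matching $M$ in $G$, together with single-job batches covering the vertices of $V \setminus V_{M}$. Consequently the total number of batches used by $\sigma$ is $N = n - |M|$. The key simplification here is that, since all jobs share the common processing time $p$, the max-batch rule assigns to \emph{every} batch (whether it contains one or two jobs) the same processing time $p$. Thus the makespan will depend on the chosen batches only through their total number $N$ and the way they are distributed over the machines.

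Next I would compute the makespan for a fixed assignment of the $N$ batches to the machines. If machine $j$ receives $k_{j}$ batches, then $k_{j}-1$ setups separate them, so its completion time is $k_{j}\,p + (k_{j}-1)\,s = k_{j}(p+s) - s$ (and $0$ when $k_{j}=0$). Hence $C_{max}(\sigma) = \max_{j}\bigl(k_{j}(p+s) - s\bigr)$, which depends on the assignment only through $\max_{j} k_{j}$. I would then argue that, for a fixed number $N$ of batches, the makespan is minimized precisely when the batches are spread as evenly as possible, so that $\max_{j} k_{j} = \lceil N/m \rceil$; any less balanced assignment forces some $k_{j} > \lceil N/m \rceil$ and therefore a strictly larger value of $\max_{j}(k_{j}(p+s)-s)$. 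This yields, for the best schedule built from a given matching $M$, the value $C_{max} = \lceil N/m \rceil (p+s) - s$ with $N = n - |M|$.

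Finally, since $\lceil N/m \rceil (p+s) - s$ is non-decreasing in $N$, and $N = n - |M|$ decreases as $|M|$ grows, minimizing the makespan is equivalent to maximizing the cardinality of the matching $M$. An optimal schedule is therefore obtained by computing a maximum cardinality matching in $G$ — which takes $O(n^{2.5})$ — forming the corresponding $|M|$ two-job batches and $n - 2|M|$ single-job batches, and then distributing these $n-|M|$ batches as evenly as possible over the $m$ machines.

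The only genuinely delicate point is the combination of the balancing step with the monotonicity step: one must confirm that no trade-off between using a smaller matching and obtaining a batch count $N$ that divides more conveniently among the machines can ever reduce the makespan. This is exactly where the monotonicity of $N \mapsto \lceil N/m \rceil$ is needed, guaranteeing that a smaller $N$ never produces a larger $\lceil N/m \rceil$, so that maximizing $|M|$ is globally optimal rather than merely locally so.
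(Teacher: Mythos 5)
Your proposal is correct and follows essentially the same route as the paper: both reduce the problem to finding a maximum cardinality matching (so as to minimize the number of batches $N=n-|M|$, each of duration $p$) and then distribute the batches evenly so the busiest machine carries $\lceil N/m\rceil$ of them, giving the makespan $\lceil N/m\rceil p+(\lceil N/m\rceil-1)s$. Your write-up is in fact more explicit than the paper's, which simply asserts this value as a lower bound and observes that the constructed schedule attains it, whereas you spell out the balancing step and the monotonicity of $N\mapsto\lceil N/m\rceil$ that rules out any trade-off from using a smaller matching.
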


\begin{proof}
The value of an optimal solution of the problem $B1, max|G=(V, E), b=2, p_i=p, s|C_{max}$ devised by $m$ is a lower bound of the problem $Bm, max|G=(V, E), b=2, p_i=p, s|C_{max}$ that is $LB=\left\lceil \frac{n-\vert M\vert}{m}\right\rceil p+\left(\left\lceil \frac{n-\vert M\vert}{m}\right\rceil-1\right)s$.  Since all the processing times are equal to $p$, then to build  an optimal schedule for the above problem, it suffices to find a maximum matching $M$ in the graph $G$, the value of the corresponding solution of $M$ is equal to $C_{max}=\left\lceil \frac{n-\vert M\vert}{m}\right\rceil p+\left(\left\lceil \frac{n-\vert M\vert}{m}\right\rceil-1\right)s$ which is equal to $LB$.
\end{proof}
Algorithm \ref{alg3} solves the problem $Bm, max|G=(V, E), b=2, p_i=p, s|C_{max}$.

\begin{algorithm}[h!]
\caption{}\label{alg3}

\KwIn{$G=(V, E), p, s$} 
\KwResult{ schedule $\sigma$}
\begin{enumerate}
\item   Find a maximum cardinality matching $M$ in the graph $G$.
\item  Form the following batches of $\sigma$:
\begin{itemize}
\item[•]For each edge of the matching $M$, process the corresponding two jobs in a same
batch.
\item[•]The other jobs are processed in single job batches.
\end{itemize}
\item  Schedule the batches at the first available machine.
\item $C_{max}(\sigma)=\left\lceil \frac{n-\vert M\vert}{m}\right\rceil p+\left(\left\lceil \frac{n-\vert M\vert}{m}\right\rceil-1\right)s$.
\end{enumerate}
\end{algorithm}

The best known algorithm for the  maximum cardinality matching is in $O(n^{2.5})$. Hence, also,  the Algorithm \ref{alg3} runs in $O(n^{2.5})$.

\begin{theorem}
The problem $Bm, sum|G=(V, E), b=2, p_i=p, s|C_{max}$ reduces to the maximum cardinality matching.
\end{theorem}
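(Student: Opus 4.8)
The plan is to mirror the proof of the preceding multi-machine \emph{max} theorem: encode a schedule by a matching, exhibit a lower bound that is minimized by a maximum matching, and then build a schedule meeting it. First I would describe any feasible schedule $\sigma$ by the matching $M$ of its two-job batches together with the assignment of the batches to the $m$ machines. In the sum model every job contributes exactly $p$, so the total processing time $np$ is independent of $M$; the only effect of $M$ is on the batches themselves, namely $|M|$ double batches of length $2p$ and $n-2|M|$ single batches of length $p$, for a total of $n-|M|$ batches. It is convenient to charge one setup to each batch, so that a single batch has augmented length $p+s$ and a double batch $2p+s$; then the completion time of a machine equals its augmented load minus $s$, and $C_{max}(\sigma)=\big(\max_{\text{machines}}\text{augmented load}\big)-s$.

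Summing augmented lengths gives a total augmented load of $np+(n-|M|)s$, so averaging over the $m$ machines yields the lower bound $C_{max}(\sigma)\ge \frac{np+(n-|M|)s}{m}-s$. This quantity is decreasing in $|M|$, which is what suggests that a maximum cardinality matching $M^{*}$, minimizing the number of batches and hence the total setup, should be optimal; I would then record the target makespan as a closed form in $n,m,p,s,|M^{*}|$ analogous to the one appearing in the max case. For the construction I would compute $M^{*}$ in $O(n^{2.5})$, form its $|M^{*}|$ doubles and $n-2|M^{*}|$ singles, and list-schedule these batches onto the machines by always appending the next batch to the currently least loaded machine.

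The main obstacle is this last balancing step. Unlike the max case, the batches are no longer identical: they come in two sizes, and the double batches of length $2p$ are indivisible and cannot be split across machines. Consequently an equal split of the \emph{number} of batches need not equalize the \emph{loads}, and the averaging bound above can fail to be tight when there are few jobs per machine (so that a double must sit alone on one machine while others idle). Closing the argument therefore requires more than averaging: I would need the additional lower bound $C_{max}\ge 2p$ whenever $M\neq\emptyset$, an exchange argument showing that greedy (LPT-type) assignment is optimal for instances with only the two item sizes $p$ and $2p$, and a careful treatment of the ceilings and of the additive $-s$ correction to confirm that the schedule built from $M^{*}$ attains the claimed minimum. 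Establishing that the indivisibility of the doubles never forces the makespan strictly above the matching-minimizing value — which is exactly where the restriction $b=2$ is used — is the delicate point of the proof.
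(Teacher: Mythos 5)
Your plan follows the same route as the paper's proof: compute a maximum cardinality matching $M$, form $\vert M\vert$ double batches of length $2p$ and $n-2\vert M\vert$ singles of length $p$, and assign them to the machines doubles first. But what you have written is a programme rather than a proof, and the step you yourself flag as delicate --- that the indivisibility of the doubles never pushes the makespan of the $M^{*}$-schedule above what a different matching could achieve --- is exactly the step that is never carried out. This is a genuine gap, not a formality. Your averaging bound $\frac{np+(n-\vert M\vert)s}{m}-s$ is a lower bound only for schedules built on the matching $M$; evaluating it at the maximum matching gives a quantity the constructed schedule need not attain, and when it is not attained nothing in your argument excludes that a schedule built on a strictly smaller matching (or on no matching at all) does better. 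For instance, with two compatible jobs on two machines, the single double batch gives $C_{max}=2p$ while two singletons give $p$; so the exchange argument you defer has to confront real boundary configurations in which maximizing $\vert M\vert$ is not obviously the right move, and the restriction you would need (e.g.\ enough jobs per machine, or a role for $s$) is never identified.

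For comparison, the paper does not supply that optimality argument either: it simply asserts that minimizing the number of batches suffices, and devotes the proof to computing the makespan of the constructed schedule in closed form, through a case analysis on $mod(\vert M\vert, m)$, on the number $k$ of free slots of length $p$ left beside the doubles, and on the number $n'$ of leftover singles that overflow those slots. That computation is the concrete content of the ``careful treatment of the ceilings'' you mention but do not perform. So relative to the paper you are missing the entire quantitative case analysis that constitutes its proof, and relative to a complete proof both you and the paper leave unproven the claim that a maximum matching is the matching to use.
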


\begin{proof}
Since all the processing times are equal to $p$, then to build  an optimal schedule for the above problem, it suffices to find a maximum matching $M$ in the graph $G$, to minimize the number of batches for the above problem  and schedule the two-job batches at the first one because its processing time is equal to $2p$ and then the  batches of single job, at the first available machine.

\begin{figure} [h!]
\begin{center}
\begin{tikzpicture}

\color{black}

\draw [draw=black] (0,0) rectangle (1,1);
\draw [draw=black] (0,1) rectangle (1,4);
\draw [draw=black] (0,4) rectangle (1,5);
\draw [draw=black] (1,0) rectangle (1.5,5);
\draw [draw=black] (1.5,0) rectangle (3.5,1);
\draw [draw=black] (1.5,1) rectangle (3.5,4);
\draw [draw=black] (1.5,4) rectangle (3.5,5);
\draw [draw=black] (3.5,0) rectangle (4,5);
\draw [draw=black] (4,0) rectangle (4.5,1);
\draw [draw=black] (4,1) rectangle (4.5,2);
\draw [draw=black] (4.5,0) rectangle (5,2);

\draw [draw=black] (5,0) rectangle (5.5,1);
\draw [draw=black] (5,1) rectangle (5.5,2);
\draw [draw=black] (4,2) rectangle (5,3);
\draw [draw=black] (4,3) rectangle (5,4);
\draw [draw=black] (4,4) rectangle (5,5);
\draw [draw=black] (5.5,0) rectangle (6,2);
\draw [draw=black] (5,2) rectangle (5.5,5);
\draw [draw=black] (6,0) rectangle (6.5,1);
\draw [draw=black] (6,1) rectangle (6.5,2);
\draw [draw=black] (5.5,2) rectangle (6,3);
\draw [draw=black] (5.5,3) rectangle (6,4);
\draw [draw=black] (5.5,4) rectangle (6,5);
\draw [draw=black] (6,2) rectangle (6.5,5);
\draw [draw=black] (6.5,0) rectangle (7,2);
\draw [draw=black] (7,0) rectangle (9,2);
\draw [draw=black] (6.5,2) rectangle (8.5,5);
\draw [draw=black] (8.5,2) rectangle (9,5);
\draw [draw=black] (9,0) rectangle (9.5,2);
\draw [draw=black] (9.5,0) rectangle (10,1);
\draw [draw=black] (9.5,1) rectangle (10,2);
\draw [draw=black] (9,2) rectangle (9.5,3);
\draw [draw=black] (9,3) rectangle (9.5,4);
\draw [draw=black] (9,4) rectangle (9.5,5);
\draw [draw=black] (6.5,4) rectangle (8.5,5);
\draw [draw=black] (6.5,3) rectangle (8.5,4);
\draw [draw=black] (,1) rectangle (9,1);

\node[scale=0.8] at (0.5,4.5) {$(J^p, J^p)$};
\node[scale=0.8] at (0.5,0.5) {$(J^p, J^p)$};
\node[] at (0.5,1.5) {$.$};
\node[] at (0.5,2) {$.$};
\node[] at (0.5,2.5) {$.$};
\node[] at (0.5,3) {$.$};
\node[] at (0.5,3.5) {$.$};
\node[] at (-0.5,4.5) {$M_1$};
\node[] at (-0.5,0.5) {$M_m$};
\node[] at (-0.5,1.5) {$.$};
\node[] at (-0.5,2.5) {$.$};
\node[] at (-0.5,2) {$.$};
\node[] at (-0.5,3.5) {$.$};
\node[] at (-0.5,3) {$.$};

\node[] at (2,0.5) {$.$};
\node[] at (2.5,0.5) {$.$};
\node[] at (3,0.5) {$.$};
\node[] at (2,4.5) {$.$};
\node[] at (2.5,4.5) {$.$};
\node[] at (3,4.5) {$.$};
\node[scale=0.8] at (4.5,4.5) {$(J^p, J^p)$};
\node[scale=0.8] at (4.5,2.5) {$(J^p, J^p)$};
\node[] at (4.5,3.25) {$.$};
\node[] at (4.5,3.5) {$.$};
\node[] at (4.5,3.75) {$.$};
\node[scale=0.8] at (4.25,0.5) {$(J^p)$};
\node[scale=0.8] at (4.25,1.5) {$(J^p)$};
\node[scale=0.8] at (5.25,0.5) {$(J^p)$};
\node[scale=0.8] at (5.25,1.5) {$(J^p)$};
\node[scale=0.8] at (6.25,0.5) {$(J^p)$};
\node[scale=0.8] at (6.25,1.5) {$(J^p)$};
\node[scale=0.8] at (5.75,2.5) {$(J^p)$};
\node[scale=0.8] at (5.75,4.5) {$(J^p)$};
\node[scale=0.8] at (9.75,0.5) {$(J^p)$};
\node[scale=0.8] at (9.75,1.5) {$(J^p)$};
\node[scale=0.8] at (9.75,0.5) {$(J^p)$};
\node[scale=0.8] at (9.25,2.5) {$(J^p)$};
\node[scale=0.8] at (9.25,4.5) {$(J^p)$};
\node[] at (7.5,0.5) {$.$};
\node[] at (8.5,0.5) {$.$};
\node[] at (8,0.5) {$.$};

\node[] at (7.5,1.5) {$.$};
\node[] at (8.5,1.5) {$.$};
\node[] at (8,1.5) {$.$};
\node[] at (7.5,2.5) {$.$};
\node[] at (8,2.5) {$.$};
\node[] at (7,2.5) {$.$};

\node[] at (7.5,3.5) {$.$};
\node[] at (8,3.5) {$.$};
\node[] at (7,3.5) {$.$};
\node[] at (7.5,4.5) {$.$};
\node[] at (8,4.5) {$.$};
\node[] at (7,4.5) {$.$};
\node[] at (7,2.25) {$.$};
\node[] at (7,2.75) {$.$};
\node[] at (7.5,2.75) {$.$};
\node[] at (7.5,2.25) {$.$};
\node[] at (8,2.75) {$.$};
\node[] at (8,2.25) {$.$};
\node[] at (7.5,3.75) {$.$};
\node[] at (7.5,3.25) {$.$};
\node[] at (8,3.75) {$.$};
\node[] at (8,3.25) {$.$};
\node[] at (7,3.75) {$.$};
\node[] at (7,3.25) {$.$};
\node[] at (7,4.75) {$.$};
\node[] at (7,4.25) {$.$};
\node[] at (7.5,4.75) {$.$};
\node[] at (7.5,4.25) {$.$};
\node[] at (8,4.75) {$.$};
\node[] at (8,4.25) {$.$};
\node[] at (7.5,1.75) {$.$};
\node[] at (7.5,1.25) {$.$};
\node[] at (8,1.75) {$.$};
\node[] at (8,1.25) {$.$};
\node[] at (8.5,1.75) {$.$};
\node[] at (8.5,1.25) {$.$};
\node[] at (7.5,0.75) {$.$};
\node[] at (7.5,0.25) {$.$};
\node[] at (8.5,0.75) {$.$};
\node[] at (8.5,0.25) {$.$};
\node[] at (8,0.75) {$.$};
\node[] at (8,0.25) {$.$};

\node[] at (2,3.5) {$.$};
\node[] at (2,3) {$.$};
\node[] at (2,2.5) {$.$};
\node[] at (2,2) {$.$};
\node[] at (2,1.5) {$.$};
\node[] at (2.5,3.5) {$.$};
\node[] at (2.5,3) {$.$};
\node[] at (2.5,2.5) {$.$};
\node[] at (2.5,2) {$.$};
\node[] at (2.5,1.5) {$.$};
\node[] at (3,3.5) {$.$};
\node[] at (3,3) {$.$};
\node[] at (3,2.5) {$.$};
\node[] at (3,2) {$.$};
\node[] at (3,1.5) {$.$};
\node[] at (5.75,3.25) {$.$};
\node[] at (5.75,3.5) {$.$};
\node[] at (5.75,3.75) {$.$};

\node[] at (9.25,3.25) {$.$};
\node[] at (9.25,3.5) {$.$};
\node[] at (9.25,3.75) {$.$};

\draw[color=black,decorate,decoration={brace}]
(0,5.25) -- (5,5.25) node[above=0.2cm,pos=0.5] {$2\left\lceil\frac{\vert M\vert}{m}\right\rceil p+ \left(\left\lceil\frac{\vert M\vert}{m}\right\rceil-1\right)s$};

\draw[color=black,decorate,decoration={brace,raise=0.1cm}]
(9.75,5) -- (9.75,2) node[above=0.2cm,pos=0.5,sloped] {$mod(\vert M\vert, m)$};

\draw[color=black,decorate,decoration={brace,raise=0.1cm}]
(10.25,2) -- (10.25,0) node[above=0.2cm,pos=0.5,sloped] {$\frac{k}{2}$};

\draw[color=black,decorate,decoration={brace}]
(5.5,5.25) -- (9.5,5.25) node[above=0.2cm,pos=0.5] {$\left\lceil\frac{n'}{m}\right\rceil p+ \left(\left\lceil\frac{n'}{m}\right\rceil-1\right)s$};

\draw (1,0) edge (1.5,1);
\draw (1.25,0) edge (1.5,0.5);
\draw (1,0.5) edge (1.5,1.5);
\draw (1,1) edge (1.5,2);
\draw (1,1.5) edge (1.5,2.5);
\draw (1,2.5) edge (1.5,3.5);
\draw (1,3.5) edge (1.5,4.5);
\draw (1,4.5) edge (1.25,5);
\draw (1,2) edge (1.5,3);
\draw (1,3) edge (1.5,4);
\draw (1,4) edge (1.5,5);

\draw (3.5,0) edge (4,1);
\draw (3.75,0) edge (4,0.5);
\draw (3.5,0.5) edge (4,1.5);
\draw (3.5,1.5) edge (4,2.5);
\draw (3.5,2.5) edge (4,3.5);
\draw (3.5,3.5) edge (4,4.5);
\draw (3.5,4.5) edge (3.75,5);
\draw (3.5,1) edge (4,2);
\draw (3.5,2) edge (4,3);
\draw (3.5,3) edge (4,4);
\draw (3.5,4) edge (4,5);

\draw (4.5,0) edge (5,1);
\draw (4.75,0) edge (5,0.5);
\draw (4.5,0.5) edge (5,1.5);
\draw (4.5,1.5) edge (4.75,2);
\draw (4.5,1) edge (5,2);

\draw (9,0) edge (9.5,1);
\draw (9.25,0) edge (9.5,0.5);
\draw (9,0.5) edge (9.5,1.5);
\draw (9,1) edge (9.5,2);
\draw (9,1.5) edge (9.25,2);

\draw (6.5,0) edge (7,1);
\draw (6.75,0) edge (7,0.5);
\draw (6.5,0.5) edge (7,1.5);
\draw (6.5,1) edge (7,2);
\draw (6.5,1.5) edge (6.75,2);

\draw (5.5,0) edge (6,1);
\draw (5.75,0) edge (6,0.5);
\draw (5.5,0.5) edge (6,1.5);
\draw (5.5,1) edge (6,2);
\draw (5.5,1.5) edge (5.75,2);

\draw (6,2) edge (6.5,3);
\draw (6.25,2) edge (6.5,2.5);
\draw (6,2.5) edge (6.5,3.5);
\draw (6,3) edge (6.5,4);
\draw (6,3.5) edge (6.5,4.5);
\draw (6,4) edge (6.5,5);
\draw (6,4.5) edge (6.25,5);

\draw (5,2) edge (5.5,3);
\draw (5.25,2) edge (5.5,2.5);
\draw (5,2.5) edge (5.5,3.5);
\draw (5,3) edge (5.5,4);
\draw (5,3.5) edge (5.5,4.5);
\draw (5,4) edge (5.5,5);
\draw (5,4.5) edge (5.25,5);

\draw (8.5,2) edge (9,3);
\draw (8.75,2) edge (9,2.5);
\draw (8.5,2.5) edge (9,3.5);
\draw (8.5,3) edge (9,4);
\draw (8.5,3.5) edge (9,4.5);
\draw (8.5,4) edge (9,5);
\draw (8.5,4.5) edge (8.75,5);

 \end{tikzpicture}
    \caption{Structure of the schedule ($(J^p, J^p)$ is a two-job batch and $(J^p)$ a single job batch).}\label{fig000}
\end{center}
\end{figure}

The makespan of the two-job batches is equal to $2\left\lceil \frac{\vert M\vert}{m} \right\rceil p+\left(\left\lceil \frac{\vert M\vert}{m} \right\rceil-1\right)s$. 

Let $k= 2\left(\left(\left\lceil\frac{\vert M\vert}{m}\right\rceil-\left\lfloor\frac{\vert M\vert}{m}\right\rfloor\right)m-mod(\vert M\vert, m)\right)$, be the number of the free places (of length $p$) in the interval $\left[2\left\lfloor \frac{\vert M\vert}{m} \right\rfloor p,2\left\lceil \frac{\vert M\vert}{m}\right\rceil p\right]$. $\left(k=\left\{\begin{tabular}{ll}
	$0$  &  if $ mod(\vert M\vert, m)=0$\\
	$2(m-mod(\vert M\vert, m))$  & otherwise \\
\end{tabular} \right.\right)$

If the number of the remaining jobs ($n-2\vert M \vert$ jobs) is less than or equal to $k/ 2$ so the makespan is equal to $2\left\lceil \frac{\vert M\vert}{m}\right\rceil p+\left(\left\lceil \frac{\vert M\vert}{m}\right\rceil-1\right)s$, but if it's between $k/2$ and $k$ so the batch number of the previous solution will increase by one.

If the  cardinal matching of $M$ is multiple of $m$  or the number of remaining jobs ($n-2\vert M \vert$ jobs) is greater than the free places $k$ and $mod \footnote{The function $mod(x, y)$ is the division remainder of  $x$ by $y$}(n',m)$ less than or equal to $mod(\vert M \vert, m)$ where  $n'=n-2\vert M\vert-k$, so  the makespan is equal to $\left(2\left\lceil \frac{\vert M\vert}{m} \right\rceil+\left\lceil  \frac{n'}{m}  \right\rceil\right) p+ \left(\left\lceil \frac{\vert M\vert}{m}\right\rceil+\left\lceil \frac{n'}{m}  \right\rceil-1\right)s$, but if it's  greater than $mod(\vert M \vert, m)$ or $mod(n',m)=0$, the batch number of the previous solution will increase by one. (See Figure \ref{fig000}).
\end{proof}

Algorithm \ref{alg4} solves the problem $Bm, sum|G=(V, E), b=2, p_i=p, s|C_{max}$.

\begin{algorithm}[h!]
\caption{}\label{alg4}

\KwIn{$G=(V, E), p, s$} 
\KwResult{ schedule $\sigma$}
\begin{enumerate}
\item   Find a maximum cardinality matching $M$ in the graph $G$.
\item  Form the following batches of $\sigma$:
\begin{itemize}
\item[•]For each edge of the matching $M$, process the corresponding two jobs in a same
batch.
\item[•]The other jobs are processed in single job batches.
\end{itemize}
\item  Schedule the batches at the first available machine starting by the two-job batches.
\item The makespan equal to: \\

$C_{max}=    2\left\lceil\frac{\vert M\vert}{m}\right\rceil p+ \left(\left\lceil\frac{\vert M\vert}{m}\right\rceil-1\right)s+$\\
$\left\{\begin{tabular}{lll}
$s$ &  if & $\dfrac{k}{2}<n-2\vert M \vert  \leq k$\\
$s+ \left\lceil\frac{n'}{m}\right\rceil p+ \left(\left\lceil\frac{n'}{m}\right\rceil-1\right)s$ & if &  $ n'>0 $ and $ \left(mod(n',m)\leq mod(\vert M \vert, m) \mbox{ or }\right.$ \\ & & $\left.mod(\vert M \vert, m)=0\right)$\\
$2s+ \left\lceil\frac{n'}{m}\right\rceil p+ \left(\left\lceil\frac{n'}{m}\right\rceil-1\right)s$ & if & $ n'>0 $ and $ \left(mod(n',m)> mod(\vert M \vert, m) \mbox{ or }\right.$\\ & & $\left.mod(n',m)=0\right)$
\end{tabular} \right. 
$
\end{enumerate}
\end{algorithm}

The best known algorithm for the  maximum cardinality matching is in $O(n^{2.5})$. Hence, also,  the Algorithm \ref{alg4} runs in $O(n^{2.5})$.

Before stating the next theorem, we start by describing the notations used and the Algorithm \ref{alg5}.
\begin{itemize}
\item[•] $B_i$: the batch $i$.
\item[•] $J_i^p$: the job of processing time $p$ of  batch $i$.
\item[•] $J_i^q$: the job of processing time $q$ of  batch $i$.
\item[•] $\sigma$: the solution given  by Algorithm \ref{alg5}.
\item[•] $Z(\sigma)$: is the makespan of $\sigma$.
\item[•] $n_p$: the number of batches of processing time $p$ of $\sigma$.
\item[•] $n_q$: the number of batches of processing time $q$ of $\sigma$.
\item[•] $n_{pq}$: the number of two-job batches, one of processing time  $p$ and the other of processing time $q$ of $\sigma$.
\end{itemize}

Algorithm \ref{alg5} described bellow is the equivalent of Algorithm \ref{alg1} in the case of two machines.

\begin{algorithm}[h!]
\caption{IS}\label{alg5}

\KwIn{$G=(V, E), p, q, s$} 
\KwResult{ schedule $\sigma$}
\begin{enumerate}
\item   From  graph $G = (V, E) $, construct a new valued graph $H_{\alpha} = (G,\alpha)$ where each edge $e = (J_i, J_j) \in E$ is valued by $\alpha(e) = min\lbrace p(J_i), p(J_j)\rbrace+s$.
\item  Find a maximum weighted matching $M$ in the graph $H_{\alpha}$.
\item  Form the following batches of $\sigma$:
\begin{itemize}
\item[•]For each edge of the matching $M$, process the corresponding two jobs in a same
batch.
\item[•]The other jobs are processed in single job batches.
\end{itemize}
\item  Schedule the batches according to $P2|p_i\in\{p,q\}|C_{max}$ with only two processing times.
\end{enumerate}
\end{algorithm}

\begin{theorem} 
	The problem $B2,max|G = (V, E), b = 2, p_i \in \{p, q\}, s|C_{max}$ is solvable in polynomial time by Algorithm \ref{Tux} in $O(n^{9})$ time.
\end{theorem}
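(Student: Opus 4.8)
The plan is to decompose every feasible solution into two nested choices and optimise them in turn. Exactly as in the single--machine arguments above, a feasible schedule is determined first by a matching $M$ in $G$ (its two--job batches) together with the single--job batches on $V\setminus V_M$, and then by an assignment of the resulting batches to the two machines. In the $\max$ model with $p_i\in\{p,q\}$ every batch has length in $\{p,q\}$: a batch has length $p$ precisely when all its jobs have processing time $p$, and length $q$ otherwise. Writing $n_p$ and $n_q$ for the numbers of length--$p$ and length--$q$ batches, a machine carrying $x$ length--$p$ and $y$ length--$q$ batches has load $xp+yq+(x+y-1)s$, so a solution is fully described by $M$ and by the split of $(n_p,n_q)$ across the two machines.

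First I would dispose of the scheduling layer. Once the multiset of batch lengths is fixed only $n_p$ and $n_q$ matter, so the optimal assignment is obtained by enumerating how many batches of each length go to machine $1$; this is precisely $P2|p_i\in\{p,q\}|C_{max}$ with set--up, and the best of the $O(n^2)$ splits is found in $O(n^2)$ time. I would record the resulting optimal value as an explicit piecewise function $C^{*}(n_p,n_q)$, obtained by the same balancing bookkeeping that produced the formula of Algorithm~\ref{alg4}.

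Second, I would control the matching layer through the total work. Summing the two machine loads, and using that an optimum uses both machines whenever at least two batches exist, the total work equals $W=\big(\sum_{l}p(J_l)-\sum_{e\in M}\min\{p(J_i),p(J_j)\}\big)+(n-|M|-2)s$, which collapses, just as in the single--machine computation, to $\big(\sum_l p(J_l)+(n-2)s\big)-\sum_{e\in M}\alpha(e)$ with $\alpha(e)=\min\{p(J_i),p(J_j)\}+s$. Hence the maximum weighted matching of Step~2 minimises $W$, and since the two loads sum to $W$ we get the lower bound $C_{max}\ge W/2$, to be used together with the trivial bound $C_{max}\ge q$.

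The hard part will be to certify global optimality, namely that minimising the work and then balancing it is never beaten by a heavier matching that splits more evenly. I would first attempt a direct exchange argument: starting from an optimal schedule, rematch jobs so as to increase $\sum_{e\in M}\alpha(e)$ (thus lower $W$) while rebalancing the two loads, exploiting that batch lengths lie in the two--element set $\{p,q\}$, so a single swap perturbs a load by at most $q$. Where such a local exchange is obstructed, the safe route is to enumerate and verify: every attainable makespan has the form $ap+bq+cs$ with $a,b,c\in\{0,\dots,n\}$, giving $O(n^3)$ candidate loads per machine and hence $O(n^6)$ target pairs $(T_1,T_2)$; for each pair one tests by a single maximum weighted matching, in $O(n^3)$, whether some matching yields batch counts $(n_p,n_q)$ realisable within $(T_1,T_2)$, and one returns the pair minimising $\max(T_1,T_2)$. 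This search costs $O(n^6)\cdot O(n^3)=O(n^9)$ and is exactly where the stated bound arises. The most delicate point, on which I would concentrate, is proving that this feasibility test is genuinely a constrained matching problem, so that imposing admissible ranges on the number of $p$-- and $q$--batches can be encoded and decided by the weighted matching of Step~2; that is what converts the optimality certificate into a polynomial algorithm.
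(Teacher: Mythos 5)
Your setup agrees with the paper's starting point: apply the maximum weighted matching with $\alpha(e)=\min\{p(J_i),p(J_j)\}+s$ to minimise total work $W$, observe that every batch length lies in $\{p,q\}$ so a solution is summarised by the pair $(n_p,n_q)$ and its split over the two machines, and use $C_{max}\ge W/2$ as the lower bound. But the step you yourself flag as ``most delicate'' is precisely the missing proof, and it does not go through as sketched. Your feasibility test asks whether some matching of $G$ achieves batch counts $(n_p,n_q)$ lying in a prescribed downward-closed region. Writing $e_{pp},e_{pq},e_{qq}$ for the numbers of matching edges of each type, one has $n_p=n_P-e_{pp}-e_{pq}$ and $n_q=n_Q-e_{qq}$ (with $n_P,n_Q$ the numbers of $p$- and $q$-jobs), so you must control \emph{two} independent linear functionals of the matching simultaneously. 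A single maximum weighted matching optimises only one linear combination and therefore only reaches points on the convex lower envelope of the achievable set of pairs $(n_p,n_q)$; the Pareto-minimal pair you need may not lie on that envelope. Deciding whether a matching with prescribed values of two such counters exists is an exact-/constrained-matching problem, which is not known to reduce to ordinary weighted matching in deterministic polynomial time. So the $O(n^6)\cdot O(n^3)$ count is an arithmetic coincidence with the stated bound, not a working algorithm, and the earlier ``direct exchange argument'' is left entirely unspecified at the one point where it matters (a heavier matching that splits more evenly).

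The paper closes exactly this gap by a structural argument rather than an enumeration of target loads. It first shows that the only conceivable improvement over the matching-based solution $\sigma$ is to merge $q$-jobs coming from distinct $(J^p,J^q)$ batches, thereby trading $a_1$ batches of length $q$ for $a_1+1$ batches of length $p$. A case analysis on the parities of $n_p$ and $n_q$ then shows: if both are even, $\sigma$ meets the bound $W/2$ and is optimal; if both are odd (or if $2p+s<q$), any such trade is non-improving because the maximality of the weighted matching forces $(a_1+1)p+s\ge a_1q$; and only when $n_p+n_q$ is odd and $2p+s\ge q$ is exactly one trade profitable (Lemma 1), with a second application always worsening the makespan (Lemma 2). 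Algorithm 6 therefore enumerates the $O(n^6)$ candidate vertex tuples realising that single exchange, recomputes a maximum weighted matching on each reduced graph in $O(n^3)$, and keeps the best schedule. If you want to salvage your route, you would need either to prove that the Pareto frontier of achievable $(n_p,n_q)$ pairs is traced out by weighted matchings (which is what the paper's parity lemmas substitute for), or to adopt the paper's bounded-exchange argument directly.
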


\begin{algorithm}[h!]
\caption{}
\label{Tux}
\KwIn{$G=(V,E)$,  $p$, $q$, $s$} 
\KwResult{ $\sigma^{*}$}

Apply the Algorithm \ref{alg5}, let $\sigma$  be the given solution\;
$\sigma^{*}:=\sigma$\;

  \If{ (  $(n_{pq}\geq 2)$ and ($n_p+n_q$ is odd) and  $(2p+s\geq q)$  ) }
  { \For{( $J_i^q \in V$ and $J_j^q \in V$)}
  {  \For{( $J_k^p \in V$ and $J_l^p \in V$ such that $(J_k^p, J_i^q)\in E$ and $(J_l^p, J_j^q)\in E$)}
  {
  $E^{'}:=E\setminus\lbrace (J_k^p, J^q)$ and $ (J_l^p, J^q), J^q \in V  \rbrace$\;
  \For{$J_e^{q} \in V$ and $J_f^{q} \in V$}
  {\If{($(J_i^q, J_e^{q})\in E$ and $(J_j^{q}, J_f^{q}) \in E$)}
  { $V^{'}:=V\setminus\lbrace J_i^q, J_j^q, J_e^{q}, J_f^{q}  \rbrace$\;
    Apply  Algorithm \ref{alg5} ($G=(V^{'}; E^{'}), p, q, s$)\;
    Let $\sigma^{'}$ be the solution given    by the Algorithm \ref{alg5}, add to $\sigma^{'}$ the two batches $(J_i^q, J_e^{q})$ and $(J_j^{q}, J_f^{q})$ at the time 0 and add  $q$ to $ C_{max}(\sigma^{'})$ \;
    \If{$C_{max}(\sigma^{'})<C_{max}(\sigma^{*})$}
    {$\sigma^{*}:=\sigma^{'}$\;
    }
  }
  }
  }
  }
  \Else{
   $\sigma$ is an optimal solution\;
  } 
  }
\end{algorithm}

\begin{proof}
It is clear that the only possible case to decrease the value of $C_{max}(\sigma)$ is to increase the number of batches with  processing time $p$ and to decrease the number of batches with  processing time $q$ for  $\sigma$. To realize the decrease, we  process two jobs with processing time  $q$ of two different batches in a same batch. If one of the two batches is a singleton and the other is of type $(J^p, J^q)$, then absurd with $\sigma$ contains the  maximum weighted matching, so to decrease $n_q$, we use at least two batches $B_i=(J_i^p, J_i^q)$ and $B_j=(J_j^p, J_j^q)$ of $\sigma$ (See Figure \ref{fig00}).

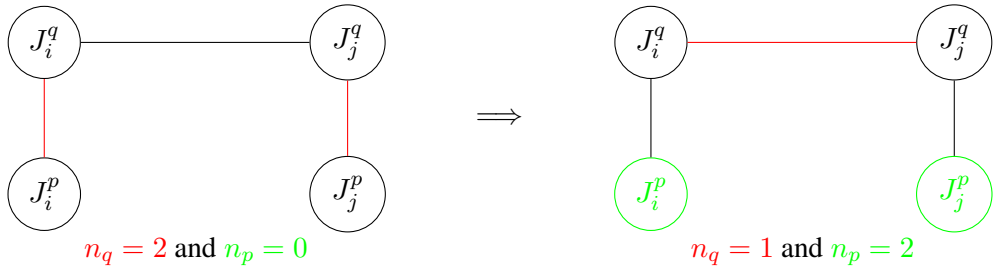
\begin{figure}[h!]
\begin{center}
\begin{tikzpicture}
\color{black}

\node[draw,circle] (T1) at (-4,0) {$J_i^q$};

\node[color=black][draw,circle] (T2) at (0,0) {$\color{black}J_j^q$};
\node[color=black][draw,circle] (T3) at (4,0) {$J_i^q$};
\node[color=black][draw,circle] (T4) at (8,0) {$J_j^q$};
\node[color=black][draw,circle] (T5) at (-4,-2) {$J_i^p$};
\node[color=black][draw,circle] (T6) at (0,-2) {$J_j^p$};
\node [color=green] [draw,circle] (T7) at (4,-2) {$J_i^p$};
\node [color=green] [draw,circle] (T8) at (8,-2) {$J_j^p$};

\draw [color=black](T1) edge (T2);
\draw [color=red] (T1) edge (T5); 
\draw [color=red] (T2) edge (T6);
\draw [color=red](T3) edge (T4); 
\draw [color=black] (T3) edge (T7);
\draw [color=black] (T4) edge (T8);
 

\node[] at (2,-1) {$\Longrightarrow$};
\node[] at (-2,-2.75) {$\color{red}n_q=2$ and $\color{green}n_p=0$};
\node[] at (6,-2.75) {$\color{red}n_q=1$ and $\color{green}n_p=2$};
 
 \end{tikzpicture}
    \caption{Structure of the batches.}\label{fig00}
\end{center}
\end{figure}

\vspace*{-0.5cm}
\begin{enumerate}
\item \textbf{Case 1:} \textbf{$n_q$ even and $n_p$ even}

A lower bound on this problem is given by the value of an optimal solution of the problem on a single machine minus an $s$ and  divided by two, $LB=\dfrac{n_{q}q+n_{p}p+(n_q+n_{p}-1)s-s} {2}$. If $n_q$ and $n_{p}$ are even, then $C_{max}(\sigma)=\dfrac{n_q}{2}q+\dfrac{n_{p}}{2}p+\left(\dfrac{n_q}{2}+\dfrac{n_{p}}{2}-1\right)s=LB$. Therefore, $\sigma$ is an optimal solution.

\item \textbf{Case 2: $n_q$ odd and $n_p$ odd}

\begin{itemize}
\item \textbf{Case 2.1:}
if $2p+s<q$, then we cannot find two batches $B_i=(J_i^{p}, J_i^{q})$ and $B_j=(J_j^{p}, J_j^{q})$ with $(J_i^{q}, J_j^{q})\in E$, otherwise we will have a contradiction with $\sigma$ constructed by Algorithm \ref{alg5}, corresponding to the maximum weighted matching. So $\sigma$ is an optimal solution. (See Figure \ref{fig0}).

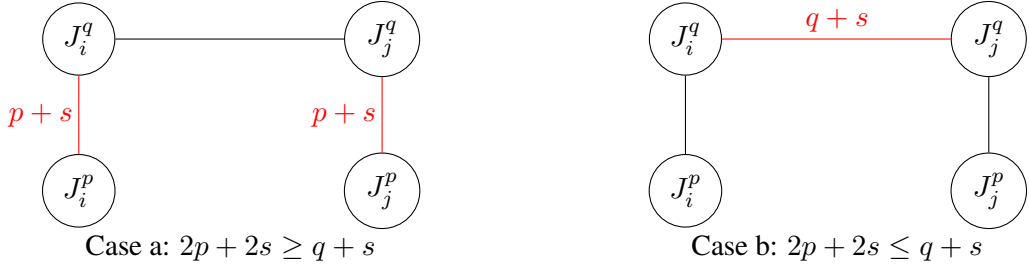
\begin{figure} [h!]
\begin{center}
\begin{tikzpicture}
\color{black}
\node[draw,circle] (T1) at (-4,0) {$J_i^q$};

\node[draw,circle] (T2) at (0,0) {\color{black}$J_j^q$};
\node[draw,circle] (T3) at (4,0) {$J_i^q$};
\node[draw,circle] (T4) at (8,0) {$J_j^q$};
\node[draw,circle] (T5) at (-4,-2) {$J_i^p$};
\node[draw,circle] (T6) at (0,-2) {$J_j^p$};
\node [] [draw,circle] (T7) at (4,-2) {$J_i^p$};

\node [] [draw,circle] (T8) at (8,-2) {$J_j^p$};

\draw (T1) edge (T2);
\draw [color=red] (T1) edge (T5);  
\draw [color=red] (T2) edge (T6);
\draw [color=red](T3) edge (T4); 
\draw (T3) edge (T7);
\draw (T4) edge (T8);

\node[] at (-4.5,-1) {$\color{red}p+s$};
\node[] at (-0.5,-1) {$\color{red}p+s$};
\node[] at (6,0.25) {$\color{red}q+s$};
\node[] at (-2,-2.75) {Case a: $2p+2s\geq q+s$};
\node[] at (6,-2.75) {Case b: $2p+2s\leq q+s$};

 \end{tikzpicture}
    \caption{Structure of the maximum weighted matching.}\label{fig0}
  \end{center}
\end{figure}

\item \textbf{Case 2.2:} if $2p+s\geq q$,  we suppose that we  deleted from $\sigma$, $a_1$ batches of  processing time $q$ and we  added $(a_1+1)$ batches of   processing time $p$.  Then the matching value corresponds to these jobs in $\sigma$ is equal to $(2 a_1 p+2 a_1 s)$, and its matching value in the new solution $\sigma_1$ is equal to $a_1 q+(a_1-1) p+(2 a_1-1)s$. We have by hypothesis $2a_1 p+2a_1 s\geq a_1 q+(a_1-1) p+(2 a_1-1)s$, so $(a_1+1) p+ s\geq a_1 q$, then:

 \begin{itemize}
 \item[•] \textbf{If $a_1$ is even:} suppose $n_q=2K+1$ and $n_p=2C+1$, then  $Z(\sigma)=(K+1)q+C p+(K+C)s$ and $Z(\sigma_1)=\left(\frac{2K-a_1}{2}\right)q+\left(C+\frac{a_1}{2}+2\right)p+(K+C+1)s$. Thus,  $Z(\sigma)\leq Z(\sigma_1)$ if, and only if, $\left(\frac{a_1}{2}+1\right)q \leq   \left(\frac{a_1}{2}+2\right)p+s$. We have by hypothesis $a_1 q\leq (a_1+1) p+ s$, then $a_1 q-\left(\frac{a_1}{2}-1\right)q < (a_1+1) p+ s-(\frac{a_1}{2}-1)p$ because $p<q$, so $Z(\sigma)< Z(\sigma_1)$.

 \item[•]  \textbf{If $a_1$ is odd:} then $Z(\sigma)=(K+1)q+Cp+(K+C)s$ and $Z(\sigma_1)=\left(\frac{2K-a_1+1}{2}\right)q+\left(\frac{2C+a_1+3}{2}\right)p+(K+C+1)s$. Thus, $Z(\sigma)\leq Z(\sigma_1)$ if, and only if, $\left(\frac{a_1+1}{2}\right)q \leq \left(\frac{a_1+1}{2}+1\right)p+s $. We have by hypothesis $a_1 q\leq (a_1+1) p+ s$, then $a_1 q -\left(\frac{a_1-1}{2}\right)q < (a_1+1) p+ s-\left(\frac{a_1-1}{2}\right)p$, so $Z(\sigma)<Z(\sigma_1)$. 
 \end{itemize}
  
 Therefore $\sigma$ is an optimal solution.
\end{itemize}

\item \textbf{Case 3: $n_q$ even and $n_p$ odd or $n_q$ odd and $n_p$ even}
\begin{itemize}
\item \textbf{Case 3.1:} if $2p+s<q$, similar to case 2.1.
\item \textbf{Case 3.2:} if $2p+s\geq q$, see Lemma \ref{lem1} (here $n_p+n_q$ is odd).
\end{itemize}
\end{enumerate}

\begin{lemma}\label{lem1}
If $n_p+n_q$ is odd and if we delete, from $\sigma$, $a_1$ batches of processing time $q$ and we add $(a_1+1)$ batches of processing time $p$, we obtain a new solution $\sigma_1$, such that $Z(\sigma)> Z(\sigma_1)$.
\end{lemma}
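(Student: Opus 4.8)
The plan is to follow the template of Case 2.2, with the roles of $\sigma$ and $\sigma_1$ reflected across the parity divide. Since $n_p+n_q$ is odd, exactly one of $n_p,n_q$ is odd in $\sigma$; after deleting $a_1$ batches of length $q$ and adding $a_1+1$ batches of length $p$ we have $n_q'=n_q-a_1$ and $n_p'=n_p+a_1+1$, so $n_p'+n_q'=n_p+n_q+1$ is even and $n_p',n_q'$ share a common parity. The whole argument then reduces to writing $Z(\sigma)$ and $Z(\sigma_1)$ in closed form and subtracting.

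First I would record, for a family of $N_q$ batches of length $q$ and $N_p$ batches of length $p$ scheduled optimally on two machines as in step 4 of Algorithm \ref{alg5}, the makespan as a function of the parities of $N_q,N_p$. This is a $P2||C_{max}$ instance with augmented item sizes $q+s$ and $p+s$ (the per-machine setup being one fewer than its number of batches), and the standing hypothesis $2p+s\ge q$, i.e. $q+s\le 2(p+s)$, fixes the most balanced split: a surplus $q$-batch should be balanced against two $p$-batches on the opposite machine. Carrying this out gives $(K+1)q+Cp+(K+C)s$ when both $N_q=2K+1$ and $N_p=2C+1$ are odd, $Kq+Cp+(K+C-1)s$ when both are even, and $Kq+(C+1)p+(K+C)s$ in the two mixed-parity cases (with $K=\lfloor N_q/2\rfloor$, $C=\lfloor N_p/2\rfloor$). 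Proving these closed forms, especially the mixed case under $2p+s\ge q$, is the technical heart and the step I expect to be the main obstacle.

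With the formulas fixed, the rest is bookkeeping. I would split on whether $n_q$ is odd (and $n_p$ even) or $n_q$ even (and $n_p$ odd), and inside each on the parity of $a_1$, which determines whether $\sigma_1$ is odd--odd or even--even and hence which formula applies. Substituting and simplifying, every sub-subcase collapses to the exact identity $Z(\sigma)-Z(\sigma_1)=\frac{(a_1-c)(q-p)}{2}$ for a small constant $c\in\{0,1,2\}$ depending only on the parities involved. Because $q>p$ this is nonnegative, and it is strictly positive in the operative regime $a_1>c$, in particular for the transformation that Algorithm \ref{Tux} actually performs, which merges surplus $q$-jobs into two new $q$--$q$ batches. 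This yields $Z(\sigma)>Z(\sigma_1)$, so that $\sigma$ is not optimal and the search of Algorithm \ref{Tux} improves it.

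Conceptually, it is worth recording why the improvement is available here but not in Case 2. The maximality of the weighted matching underlying $\sigma$ forces $(a_1+1)p+s\ge a_1q$ exactly as in Case 2.2 (the $2a_1$ edges of weight $p+s$ used by $\sigma$ cannot be beaten by the $a_1$ edges of weight $q+s$ and $a_1-1$ edges of weight $p+s$ of $\sigma_1$), so the total single-machine workload does not decrease under the swap. The gain $Z(\sigma)-Z(\sigma_1)$ therefore comes purely from the better parity balance of $\sigma_1$ over the two machines; when $n_p+n_q$ is even (Case 2) the parity change is too small to offset the workload increase, which is precisely why the same swap is harmful there and beneficial here.
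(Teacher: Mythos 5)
Your proposal follows essentially the same route as the paper's proof: both reduce the claim to the four parity-dependent closed-form makespans for a two-machine schedule of $N_q$ $q$-batches and $N_p$ $p$-batches (the paper's Types 1--4), then case-split on the parities of $n_q$, $n_p$ and $a_1$ and observe that the resulting difference is a nonnegative multiple of $q-p$. Your unified identity $Z(\sigma)-Z(\sigma_1)=\tfrac{(a_1-c)(q-p)}{2}$ with $c\in\{0,1,2\}$ reproduces the paper's four subcases exactly, and you correctly flag the degenerate situation $a_1=c$ where one only gets equality --- a point the paper's own proof glosses over when it asserts strict inequality for all $a_1$.
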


\begin{proof} 
Let $n'_q=n_q-a_1$ and $n'_p=n_{p}+a_1 +1$ be the new numbers of batches. There are four possible types of scheduling depending on the parity of $n'_p$ and $n'_q$, see Figure \ref{ft}.

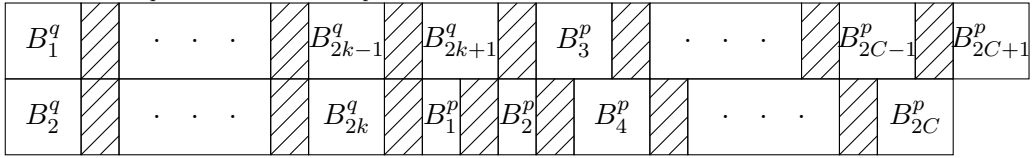
\begin{figure}[h!]
\begin{center}
\begin{tikzpicture}[scale=1]
\color{black}

\node[right] at (0,2.25) {- Type 1: $n'_q=2K$ and $n'_p=2C$ with $Z(\sigma)=Kq+Cp+(K+C-1)s$};

\draw [draw=black] (0,0) rectangle (1,1);
\draw [draw=black] (0,1) rectangle (1,2);
\draw [draw=black] (1,0) rectangle (1.5,1);
\draw [draw=black] (1,1) rectangle (1.5,2);

\draw [draw=black] (1.5,0) rectangle (3.5,1);
\draw [draw=black] (1.5,1) rectangle (3.5,2);
\draw [draw=black] (3.5,0) rectangle (4,1);
\draw [draw=black] (3.5,1) rectangle (4,2);

\draw [draw=black] (4,0) rectangle (5,1);
\draw [draw=black] (4,1) rectangle (5,2);
\draw [draw=black] (5,0) rectangle (5.5,1);
\draw [draw=black] (5,1) rectangle (5.5,2);
\draw [draw=black] (5.5,0) rectangle (6.5,1);
\draw [draw=black] (5.5,1) rectangle (6.5,2);
\draw [draw=black] (6.5,0) rectangle (7,1);
\draw [draw=black] (6.5,1) rectangle (7,2);
\draw [draw=black] (7,0) rectangle (9,1);
\draw [draw=black] (7,1) rectangle (9,2);
\draw [draw=black] (9,0) rectangle (9.5,1);
\draw [draw=black] (9,1) rectangle (9.5,2);
\draw [draw=black] (9.5,0) rectangle (10.5,1);
\draw [draw=black] (9.5,1) rectangle (10.5,2);

\node[] at (0.5,1.5) {$B_1^q$};
 \node[] at (0.5,0.5) {$B_2^q$};
  \node[] at (2,0.5) {$.$};
 \node[] at (2.5,0.5) {$.$};
 \node[] at (3,0.5) {$.$};
 \node[] at (2,1.5) {$.$};
 \node[] at (2.5,1.5) {$.$};
 \node[] at (3,1.5) {$.$};
 \node[] at (4.5,1.5) {$B_{2k-1}^q$};
 \node[] at (4.5,0.5) {$B_{2k}^q$};
 \node[] at (6,1.5) {$B_{1}^p$};
 \node[] at (6,0.5) {$B_2^p$};
 
 \node[] at (7.5,0.5) {$.$};
 \node[] at (8,0.5) {$.$};
 \node[] at (8.5,0.5) {$.$};
 \node[] at (7.5,1.5) {$.$};
 \node[] at (8,1.5) {$.$};
 \node[] at (8.5,1.5) {$.$};
 
 \node[] at (10,1.5) {$B_{2C-1}^p$};
 \node[] at (10,0.5) {$B_{2C}^p$};

    \draw (1,0) edge (1.5,0.5);
\draw (1.25,0) edge (1.5,0.25);
\draw (1,0.25) edge (1.5,0.75);
\draw (1,0.5) edge (1.5,1);
\draw (1,0.75) edge (1.25,1);
 
    \draw (1,1) edge (1.5,1.5);
\draw (1.25,1) edge (1.5,1.25);
\draw (1,1.25) edge (1.5,1.75);
\draw (1,1.5) edge (1.5,2);
\draw (1,1.75) edge (1.25,2);
 
  \draw (3.5,0) edge (4,0.5);
\draw (3.75,0) edge (4,0.25);
\draw (3.5,0.25) edge (4,0.75);
\draw (3.5,0.5) edge (4,1);
\draw (3.5,0.75) edge (3.75,1);

\draw (3.5,1) edge (4,1.5);
\draw (3.75,1) edge (4,1.25);
\draw (3.5,1.25) edge (4,1.75);
\draw (3.5,1.5) edge (4,2);
\draw (3.5,1.75) edge (3.75,2);
  
 \draw (5,0) edge (5.5,0.5);
\draw (5.25,0) edge (5.5,0.25);
\draw (5,0.25) edge (5.5,0.75);
\draw (5,0.5) edge (5.5,1);
\draw (5,0.75) edge (5.25,1);
 
  \draw (5,1) edge (5.5,1.5);
\draw (5.25,1) edge (5.5,1.25);
\draw (5,1.25) edge (5.5,1.75);
\draw (5,1.5) edge (5.5,2);
\draw (5,1.75) edge (5.25,2);

\draw (6.5,0) edge (7,0.5);
\draw (6.75,0) edge (7,0.25);
\draw (6.5,0.25) edge (7,0.75);
\draw (6.5,0.5) edge (7,1);
\draw (6.5,0.75) edge (6.75,1);
 
 \draw (6.5,1) edge (7,1.5);
\draw (6.75,1) edge (7,1.25);
\draw (6.5,1.25) edge (7,1.75);
\draw (6.5,1.5) edge (7,2);
\draw (6.5,1.75) edge (6.75,2);

\draw (9,0) edge (9.5,0.5);
\draw (9.25,0) edge (9.5,0.25);
\draw (9,0.25) edge (9.5,0.75);
\draw (9,0.5) edge (9.5,1);
\draw (9,0.75) edge (9.25,1);
 
 \draw (9,1) edge (9.5,1.5);
\draw (9.25,1) edge (9.5,1.25);
\draw (9,1.25) edge (9.5,1.75);
\draw (9,1.5) edge (9.5,2);
\draw (9,1.75) edge (9.25,2);
\end{tikzpicture}

\vspace{0.5cm}
\begin{tikzpicture}[scale=1]
\color{black}
\node[right] at (0,2.25) {- Type 2: $n'_q=2K$ and $n'_p=2C+1$ with $Z(\sigma)=Kq+(C+1)p+(K+C)s$};
\draw [draw=black] (0,0) rectangle (1,1);
\draw [draw=black] (0,1) rectangle (1,2);
\draw [draw=black] (1,0) rectangle (1.5,1);
\draw [draw=black] (1,1) rectangle (1.5,2);

\draw [draw=black] (1.5,0) rectangle (3.5,1);
\draw [draw=black] (1.5,1) rectangle (3.5,2);
\draw [draw=black] (3.5,0) rectangle (4,1);
\draw [draw=black] (3.5,1) rectangle (4,2);

\draw [draw=black] (4,0) rectangle (5,1);
\draw [draw=black] (4,1) rectangle (5,2);
\draw [draw=black] (5,0) rectangle (5.5,1);
\draw [draw=black] (5,1) rectangle (5.5,2);
\draw [draw=black] (5.5,0) rectangle (6.5,1);
\draw [draw=black] (5.5,1) rectangle (6.5,2);
\draw [draw=black] (6.5,0) rectangle (7,1);
\draw [draw=black] (6.5,1) rectangle (7,2);
\draw [draw=black] (7,0) rectangle (9,1);
\draw [draw=black] (7,1) rectangle (9,2);
\draw [draw=black] (9,0) rectangle (9.5,1);
\draw [draw=black] (9,1) rectangle (9.5,2);
\draw [draw=black] (9.5,0) rectangle (10.5,1);
\draw [draw=black] (9.5,1) rectangle (10.5,2);

\draw [draw=black] (10.5,1) rectangle (11,2);
\draw [draw=black] (11,1) rectangle (12,2);

\node[] at (0.5,1.5) {\color{black}$B_1^q$};
 \node[] at (0.5,0.5) {\color{black}$B_2^q$};
  \node[] at (2,0.5) {\color{black}$.$};
 \node[] at (2.5,0.5) {\color{black}$.$};
 \node[] at (3,0.5) {\color{black}$.$};
 \node[] at (2,1.5) {\color{black}$.$};
 \node[] at (2.5,1.5) {\color{black}$.$};
 \node[] at (3,1.5) {\color{black}$.$};
 \node[] at (4.5,1.5) {\color{black}$B_{2k-1}^q$};
 \node[] at (4.5,0.5) {\color{black}$B_{2k}^q$};
 \node[] at (6,1.5) {\color{black}$B_{1}^p$};
 \node[] at (6,0.5) {\color{black}$B_2^p$};
 
 \node[] at (7.5,0.5) {\color{black}$.$};
 \node[] at (8,0.5) {\color{black}$.$};
 \node[] at (8.5,0.5) {\color{black}$.$};
 \node[] at (7.5,1.5) {\color{black}$.$};
 \node[] at (8,1.5) {\color{black}$.$};
 \node[] at (8.5,1.5) {\color{black}$.$};
 
 \node[] at (10,1.5) {\color{black}$B_{2C-1}^p$};
 \node[] at (10,0.5) {\color{black}$B_{2C}^p$};
 \node[] at (11.5,1.5) {\color{black}$B_{2C+1}^p$};

\draw [draw=black](1,0) edge (1.5,0.5);
\draw [draw=black](1.25,0) edge (1.5,0.25);
\draw [draw=black](1,0.25) edge (1.5,0.75);
\draw [draw=black](1,0.5) edge (1.5,1);
\draw [draw=black](1,0.75) edge (1.25,1);
 
\draw [draw=black](1,1) edge (1.5,1.5);
\draw [draw=black](1.25,1) edge (1.5,1.25);
\draw [draw=black](1,1.25) edge (1.5,1.75);
\draw [draw=black](1,1.5) edge (1.5,2);
\draw [draw=black](1,1.75) edge (1.25,2);
 
  \draw [draw=black](3.5,0) edge (4,0.5);
\draw [draw=black](3.75,0) edge (4,0.25);
\draw [draw=black](3.5,0.25) edge (4,0.75);
\draw [draw=black](3.5,0.5) edge (4,1);
\draw [draw=black](3.5,0.75) edge (3.75,1);

\draw [draw=black](3.5,1) edge (4,1.5);
\draw [draw=black](3.75,1) edge (4,1.25);
\draw [draw=black] (3.5,1.25) edge (4,1.75);
\draw [draw=black](3.5,1.5) edge (4,2);
\draw [draw=black](3.5,1.75) edge (3.75,2);
  
 \draw [draw=black](5,0) edge (5.5,0.5);
\draw [draw=black](5.25,0) edge (5.5,0.25);
\draw [draw=black](5,0.25) edge (5.5,0.75);
\draw [draw=black](5,0.5) edge (5.5,1);
\draw [draw=black](5,0.75) edge (5.25,1);
 
\draw [draw=black](5,1) edge (5.5,1.5);
\draw [draw=black](5.25,1) edge (5.5,1.25);
\draw [draw=black](5,1.25) edge (5.5,1.75);
\draw [draw=black](5,1.5) edge (5.5,2);
\draw [draw=black](5,1.75) edge (5.25,2);

\draw [draw=black](6.5,0) edge (7,0.5);
\draw [draw=black](6.75,0) edge (7,0.25);
\draw [draw=black](6.5,0.25) edge (7,0.75);
\draw [draw=black](6.5,0.5) edge (7,1);
\draw [draw=black](6.5,0.75) edge (6.75,1);
 
 \draw [draw=black](6.5,1) edge (7,1.5);
\draw [draw=black](6.75,1) edge (7,1.25);
\draw [draw=black](6.5,1.25) edge (7,1.75);
\draw [draw=black](6.5,1.5) edge (7,2);
\draw [draw=black](6.5,1.75) edge (6.75,2);

\draw [draw=black](9,0) edge (9.5,0.5);
\draw [draw=black](9.25,0) edge (9.5,0.25);
\draw [draw=black](9,0.25) edge (9.5,0.75);
\draw [draw=black](9,0.5) edge (9.5,1);
\draw [draw=black](9,0.75) edge (9.25,1);
 
 \draw [draw=black](9,1) edge (9.5,1.5);
\draw [draw=black](9.25,1) edge (9.5,1.25);
\draw [draw=black](9,1.25) edge (9.5,1.75);
\draw [draw=black](9,1.5) edge (9.5,2);
\draw [draw=black](9,1.75) edge (9.25,2);

 \draw [draw=black](10.5,1) edge (11,1.5);
\draw [draw=black](10.75,1) edge (11,1.25);
\draw [draw=black](10.5,1.25) edge (11,1.75);
\draw [draw=black](10.5,1.5) edge (11,2);
\draw [draw=black](10.5,1.75) edge (10.75,2);
\end{tikzpicture}

\vspace{0.5cm}
\begin{tikzpicture}[scale=1]
\color{black}
\node[right] at (0,2.25) {- Type 3: $n'_q=2K+1$ and $n'_p=2C$ with $Z(\sigma)=Kq+(C+1)p+(K+C)s$};
\draw [draw=black] (0,0) rectangle (1,1);
\draw [draw=black] (0,1) rectangle (1,2);
\draw [draw=black] (1,0) rectangle (1.5,1);
\draw [draw=black] (1,1) rectangle (1.5,2);

\draw [draw=black] (1.5,0) rectangle (3.5,1);
\draw [draw=black] (1.5,1) rectangle (3.5,2);
\draw [draw=black] (3.5,0) rectangle (4,1);
\draw [draw=black] (3.5,1) rectangle (4,2);

\draw [draw=black] (4,0) rectangle (5,1);
\draw [draw=black] (4,1) rectangle (5,2);
\draw [draw=black] (5,0) rectangle (5.5,1);
\draw [draw=black] (5,1) rectangle (5.5,2);
\draw [draw=black] (5.5,0) rectangle (6,1);
\draw [draw=black] (6,0) rectangle (6.5,1);
\draw [draw=black] (6.5,0) rectangle (7,1);
\draw [draw=black] (5.5,1) rectangle (6.5,2);
\draw [draw=black] (6.5,1) rectangle (7,2);
\draw [draw=black] (7,0) rectangle (7.5,1);
\draw [draw=black] (7,1) rectangle (8,2);
\draw [draw=black] (7.5,0) rectangle (8.5,1);
\draw [draw=black] (8,1) rectangle (8.5,2);
\draw [draw=black] (8.5,0) rectangle (9,1);

\draw [draw=black] (8.5,1) rectangle (10.5,2);
\draw [draw=black] (9,0) rectangle (11,1);
\draw [draw=black] (10.5,1) rectangle (11,2);
\draw [draw=black] (11,0) rectangle (11.5,1);
\draw [draw=black] (11,1) rectangle (12,2);
\draw [draw=black] (11.5,0) rectangle (12.5,1);

\node[] at (0.5,1.5) {$B_1^q$};
 \node[] at (0.5,0.5) {$B_2^q$};
  \node[] at (2,0.5) {$.$};
 \node[] at (2.5,0.5) {$.$};
 \node[] at (3,0.5) {$.$};
 \node[] at (2,1.5) {$.$};
 \node[] at (2.5,1.5) {$.$};
 \node[] at (3,1.5) {$.$};
 \node[] at (4.5,1.5) {$B_{2k-1}^q$};
 \node[] at (4.5,0.5) {$B_{2k}^q$};
 \node[] at (6,1.5) {$B_{2k+1}^q$};
 \node[] at (7.5,1.5) {$B_{3}^p$};
 \node[] at (8,0.5) {$B_{4}^p$};
  \node[] at (5.75,0.5) {$B_{1}^p$};
\node[] at (6.75,0.5) {$B_{2}^p$};
\node[] at (11.5,1.5) {$B_{2C-1}^p$};
\node[] at (12,0.5) {$B_{2C}^p$};


\node[] at (9,1.5) {$.$};
 \node[] at (9.5,1.5) {$.$};
 \node[] at (10,1.5) {$.$};
 \node[] at (9.5,0.5) {$.$};
 \node[] at (10,0.5) {$.$};
\node[] at (10.5,0.5) {$.$};

    \draw (1,0) edge (1.5,0.5);
\draw (1.25,0) edge (1.5,0.25);
\draw (1,0.25) edge (1.5,0.75);
\draw (1,0.5) edge (1.5,1);
\draw (1,0.75) edge (1.25,1);
 
    \draw (1,1) edge (1.5,1.5);
\draw (1.25,1) edge (1.5,1.25);
\draw (1,1.25) edge (1.5,1.75);
\draw (1,1.5) edge (1.5,2);
\draw (1,1.75) edge (1.25,2);
 
  \draw (3.5,0) edge (4,0.5);
\draw (3.75,0) edge (4,0.25);
\draw (3.5,0.25) edge (4,0.75);
\draw (3.5,0.5) edge (4,1);
\draw (3.5,0.75) edge (3.75,1);

\draw (3.5,1) edge (4,1.5);
\draw (3.75,1) edge (4,1.25);
\draw (3.5,1.25) edge (4,1.75);
\draw (3.5,1.5) edge (4,2);
\draw (3.5,1.75) edge (3.75,2);
  
 \draw (5,0) edge (5.5,0.5);
\draw (5.25,0) edge (5.5,0.25);
\draw (5,0.25) edge (5.5,0.75);
\draw (5,0.5) edge (5.5,1);
\draw (5,0.75) edge (5.25,1);
 
  \draw (5,1) edge (5.5,1.5);
\draw (5.25,1) edge (5.5,1.25);
\draw (5,1.25) edge (5.5,1.75);
\draw (5,1.5) edge (5.5,2);
\draw (5,1.75) edge (5.25,2);
 
  \draw (6,0) edge (6.5,0.5);
\draw (6.25,0) edge (6.5,0.25);
\draw (6,0.25) edge (6.5,0.75);
\draw (6,0.5) edge (6.5,1);
\draw (6,0.75) edge (6.25,1);

\draw (7,0) edge (7.5,0.5);
\draw (7.25,0) edge (7.5,0.25);
\draw (7,0.25) edge (7.5,0.75);
\draw (7,0.5) edge (7.5,1);
\draw (7,0.75) edge (7.25,1);

\draw (6.5,1) edge (7,1.5);
\draw (6.75,1) edge (7,1.25);
\draw (6.5,1.25) edge (7,1.75);
\draw (6.5,1.5) edge (7,2);
\draw (6.5,1.75) edge (6.75,2);

\draw (8,1) edge (8.5,1.5);
\draw (8.25,1) edge (8.5,1.25);
\draw (8,1.25) edge (8.5,1.75);
\draw (8,1.5) edge (8.5,2);
\draw (8,1.75) edge (8.25,2);

\draw (8.5,0) edge (9,0.5);
\draw (8.75,0) edge (9,0.25);
\draw (8.5,0.25) edge (9,0.75);
\draw (8.5,0.5) edge (9,1);
\draw (8.5,0.75) edge (8.75,1);

\draw (10.5,1) edge (11,1.5);
\draw (10.75,1) edge (11,1.25);
\draw (10.5,1.25) edge (11,1.75);
\draw (10.5,1.5) edge (11,2);
\draw (10.5,1.75) edge (10.75,2);

\draw (11,0) edge (11.5,0.5);
\draw (11.25,0) edge (11.5,0.25);
\draw (11,0.25) edge (11.5,0.75);
\draw (11,0.5) edge (11.5,1);
\draw (11,0.75) edge (11.25,1);
\end{tikzpicture}

\vspace{0.5cm}
\begin{tikzpicture}[scale=1]
\color{black}

\node[right] at (0,2.25) {- Type 4: $n'_q=2K+1$ and $n'_p=2C+1$ with $Z(\sigma)=(K+1)q+Cp+(K+C)s$};
\draw [draw=black] (0,0) rectangle (1,1);
\draw [draw=black] (0,1) rectangle (1,2);
\draw [draw=black] (1,0) rectangle (1.5,1);
\draw [draw=black] (1,1) rectangle (1.5,2);

\draw [draw=black] (1.5,0) rectangle (3.5,1);
\draw [draw=black] (1.5,1) rectangle (3.5,2);
\draw [draw=black] (3.5,0) rectangle (4,1);
\draw [draw=black] (3.5,1) rectangle (4,2);

\draw [draw=black] (4,0) rectangle (5,1);
\draw [draw=black] (4,1) rectangle (5,2);
\draw [draw=black] (5,0) rectangle (5.5,1);
\draw [draw=black] (5,1) rectangle (5.5,2);
\draw [draw=black] (5.5,0) rectangle (6,1);
\draw [draw=black] (6,0) rectangle (6.5,1);
\draw [draw=black] (6.5,0) rectangle (7,1);
\draw [draw=black] (5.5,1) rectangle (6.5,2);
\draw [draw=black] (6.5,1) rectangle (7,2);
\draw [draw=black] (7,0) rectangle (7.5,1);
\draw [draw=black] (7,1) rectangle (8,2);
\draw [draw=black] (7.5,0) rectangle (8.5,1);
\draw [draw=black] (8,1) rectangle (8.5,2);
\draw [draw=black] (8.5,0) rectangle (9,1);

\draw [draw=black] (8.5,1) rectangle (10.5,2);
\draw [draw=black] (9,0) rectangle (11,1);
\draw [draw=black] (10.5,1) rectangle (11,2);
\draw [draw=black] (11,0) rectangle (11.5,1);
\draw [draw=black] (11,1) rectangle (12,2);
\draw [draw=black] (11.5,0) rectangle (12.5,1);
\draw [draw=black] (12,1) rectangle (12.5,2);
\draw [draw=black] (12.5,1) rectangle (13.5,2);

\node[] at (0.5,1.5) {$B_1^q$};
 \node[] at (0.5,0.5) {$B_2^q$};
  \node[] at (2,0.5) {$.$};
 \node[] at (2.5,0.5) {$.$};
 \node[] at (3,0.5) {$.$};
 \node[] at (2,1.5) {$.$};
 \node[] at (2.5,1.5) {$.$};
 \node[] at (3,1.5) {$.$};
 \node[] at (4.5,1.5) {$B_{2k-1}^q$};
 \node[] at (4.5,0.5) {$B_{2k}^q$};
 \node[] at (6,1.5) {$B_{2k+1}^q$};
 \node[] at (7.5,1.5) {$B_{3}^p$};
 \node[] at (8,0.5) {$B_{4}^p$};
  \node[] at (5.75,0.5) {$B_{1}^p$};
\node[] at (6.75,0.5) {$B_{2}^p$};
\node[] at (11.5,1.5) {$B_{2C-1}^p$};
\node[] at (12,0.5) {$B_{2C}^p$};
\node[] at (13,1.5) {$B_{2C+1}^p$};


\node[] at (9,1.5) {$.$};
 \node[] at (9.5,1.5) {$.$};
 \node[] at (10,1.5) {$.$};
 \node[] at (9.5,0.5) {$.$};
 \node[] at (10,0.5) {$.$};
\node[] at (10.5,0.5) {$.$};
 \draw (1,0) edge (1.5,0.5);
\draw (1.25,0) edge (1.5,0.25);
\draw (1,0.25) edge (1.5,0.75);
\draw (1,0.5) edge (1.5,1);
\draw (1,0.75) edge (1.25,1);
 
    \draw (1,1) edge (1.5,1.5);
\draw (1.25,1) edge (1.5,1.25);
\draw (1,1.25) edge (1.5,1.75);
\draw (1,1.5) edge (1.5,2);
\draw (1,1.75) edge (1.25,2);
 
  \draw (3.5,0) edge (4,0.5);
\draw (3.75,0) edge (4,0.25);
\draw (3.5,0.25) edge (4,0.75);
\draw (3.5,0.5) edge (4,1);
\draw (3.5,0.75) edge (3.75,1);

\draw (3.5,1) edge (4,1.5);
\draw (3.75,1) edge (4,1.25);
\draw (3.5,1.25) edge (4,1.75);
\draw (3.5,1.5) edge (4,2);
\draw (3.5,1.75) edge (3.75,2);
  
 \draw (5,0) edge (5.5,0.5);
\draw (5.25,0) edge (5.5,0.25);
\draw (5,0.25) edge (5.5,0.75);
\draw (5,0.5) edge (5.5,1);
\draw (5,0.75) edge (5.25,1);
 
  \draw (5,1) edge (5.5,1.5);
\draw (5.25,1) edge (5.5,1.25);
\draw (5,1.25) edge (5.5,1.75);
\draw (5,1.5) edge (5.5,2);
\draw (5,1.75) edge (5.25,2);
 
  \draw (6,0) edge (6.5,0.5);
\draw (6.25,0) edge (6.5,0.25);
\draw (6,0.25) edge (6.5,0.75);
\draw (6,0.5) edge (6.5,1);
\draw (6,0.75) edge (6.25,1);

\draw (7,0) edge (7.5,0.5);
\draw (7.25,0) edge (7.5,0.25);
\draw (7,0.25) edge (7.5,0.75);
\draw (7,0.5) edge (7.5,1);
\draw (7,0.75) edge (7.25,1);

\draw (6.5,1) edge (7,1.5);
\draw (6.75,1) edge (7,1.25);
\draw (6.5,1.25) edge (7,1.75);
\draw (6.5,1.5) edge (7,2);
\draw (6.5,1.75) edge (6.75,2);

\draw (8,1) edge (8.5,1.5);
\draw (8.25,1) edge (8.5,1.25);
\draw (8,1.25) edge (8.5,1.75);
\draw (8,1.5) edge (8.5,2);
\draw (8,1.75) edge (8.25,2);

\draw (8.5,0) edge (9,0.5);
\draw (8.75,0) edge (9,0.25);
\draw (8.5,0.25) edge (9,0.75);
\draw (8.5,0.5) edge (9,1);
\draw (8.5,0.75) edge (8.75,1);

\draw (10.5,1) edge (11,1.5);
\draw (10.75,1) edge (11,1.25);
\draw (10.5,1.25) edge (11,1.75);
\draw (10.5,1.5) edge (11,2);
\draw (10.5,1.75) edge (10.75,2);

\draw (11,0) edge (11.5,0.5);
\draw (11.25,0) edge (11.5,0.25);
\draw (11,0.25) edge (11.5,0.75);
\draw (11,0.5) edge (11.5,1);
\draw (11,0.75) edge (11.25,1);

\draw (12,1) edge (12.5,1.5);
\draw (12.25,1) edge (12.5,1.25);
\draw (12,1.25) edge (12.5,1.75);
\draw (12,1.5) edge (12.5,2);
\draw (12,1.75) edge (12.25,2);
\end{tikzpicture}
    \caption{Types of scheduling.}\label{ft}
\end{center}
\end{figure}

Consider the possible values of $\sigma_1$ regarding the parity of $n_p$, $n_q$ and $a_1$, and we show that $Z(\sigma)> Z(\sigma_1)$:
\begin{itemize}
\item \textbf{$n_q$ is even and $n_p$ is odd:}
\begin{itemize}
    \item If $a_1$ is even, $\sigma_1$ is of type 1 with   $Z(\sigma_1)=\left(\dfrac{2K-a_1}{2}\right)q+\left(\dfrac{2C+a_1+2}{2}\right)p+(K+C)s$
    $<Kq+(C+1)p+(K+C)s$ (because $\dfrac{a_1}{2}q > \dfrac{a_1}{2} p$), so $Z(\sigma)> Z(\sigma_1)$.
    \item If $a_1$ is odd, $\sigma_1$ is of type 4 with  $Z(\sigma_1)=\left(\dfrac{2K-a_1+1}{2}\right)q+\left(\dfrac{2C+a_1+1}{2}\right)p+(K+C)s$ $<Kq+(C+1)p+(K+C)s$ (because  $\left(\dfrac{a_1-1}{2}\right)q > \left(\dfrac{a_1-1}{2}\right)p$), so $Z(\sigma)> Z(\sigma_1)$.
\end{itemize}
\item \textbf{$n_q$ is odd and $n_p$ is even:}
\begin{itemize}
    \item If $a_1$ is even, $\sigma_1$ is of type 4 with  $Z(\sigma_1)=\left(\dfrac{2K-a_1+2}{2}\right)q+\left(\dfrac{2C+a_1}{2}\right)p+(K+C)s$
    $<Kq+(C+1)p+(K+C)s$ (because $\left(\dfrac{a_1-2}{2}\right)q > \left(\dfrac{a_1-2}{2}\right) p$), so $Z(\sigma)> Z(\sigma_1)$.
    \item If $a_1$ is odd, $\sigma_1$ is of type 1 with  $Z(\sigma_1)=\left(\dfrac{2K-a_1+1}{2}\right)q+\left(\dfrac{2C+a_1+1}{2}\right)p+(K+C)s$
     $<Kq+(C+1)p+(K+C)s$ (because $\left(\dfrac{a_1-1}{2}\right)q > \left(\dfrac{a_1-1}{2}\right)p$), so $Z(\sigma)> Z(\sigma_1)$.
\end{itemize}
\end{itemize}

So $\forall a_1$, $Z(\sigma)> Z(\sigma_1)$. Therefore, $\sigma$ is not an optimal solution of the problem.
\end{proof}

\begin{lemma}
If we apply Algorithm \ref{Tux} a second time to $\sigma_1$, then the makespan increases.
\end{lemma}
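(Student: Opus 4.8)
The plan is to track the parity of the total batch number $n_p+n_q$. By Lemma~\ref{lem1} the first pass replaces the odd-total solution $\sigma$ by $\sigma_1$ carrying exactly one extra batch, so $\sigma_1$ has $n_p+n_q$ \emph{even}; concretely $\sigma_1$ is of Type~1 ($n_q,n_p$ both even) or Type~4 (both odd) in Figure~\ref{ft}. A second pass performs the same elementary move, deleting $a$ batches of length $q$ and adding $a+1$ batches of length $p$, and hence produces a solution $\sigma_2$ with $n_p+n_q$ odd, i.e.\ of Type~2 or Type~3. I would then read $Z(\sigma_2)-Z(\sigma_1)$ directly off the four type-formulas, splitting on the parity of $a$.

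When $\sigma_1$ is of Type~4 the move is exactly the one treated in Case~2.2 of the proof of the main theorem, so $Z(\sigma_1)<Z(\sigma_2)$ follows verbatim from that case. When $\sigma_1$ is of Type~1, a direct substitution into the formulas gives
$$Z(\sigma_2)-Z(\sigma_1)=\begin{cases}-\dfrac a2(q-p)+p+s,& a\text{ even }(\sigma_2\text{ Type }2),\\ -\dfrac{a+1}2(q-p)+p+s,& a\text{ odd }(\sigma_2\text{ Type }3).\end{cases}$$

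The subtle point—and the main obstacle—is that, unlike in Lemma~\ref{lem1} where parity alone fixed the sign, these differences are \emph{not} sign-determined and would turn negative for large $a$. The remedy is the maximum-weighted-matching inequality already exploited in Case~2.2: since $\sigma_1$ is produced by Algorithm~\ref{alg5} run on the reduced instance, its matching is of maximum $\alpha$-weight there, and comparing that weight with the weight of the reshuffled matching (breaking $2a$ batches $(J^p,J^q)$ into $a$ batches $(J^q,J^q)$, $a-1$ batches $(J^p,J^p)$ and two singletons) forces $a(q-p)\le p+s$. Substituting this bound yields $Z(\sigma_2)-Z(\sigma_1)\ge\frac{p+s}{2}>0$ in the even-$a$ case and $Z(\sigma_2)-Z(\sigma_1)\ge\frac{2p+s-q}{2}\ge 0$ in the odd-$a$ case, the latter nonnegative by the standing hypothesis $2p+s\ge q$.

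To close the argument I would treat strictness and the boundary. Away from the degenerate configuration $q=2p+s$ with the matching bound tight, the displayed estimates are strict and the second pass \emph{strictly} increases the makespan; at that boundary (where $a=1$) the two opposing moves cancel, $\sigma_2$ merely reproduces a configuration already examined, and $Z(\sigma_2)=Z(\sigma_1)$, so no decrease is possible. I expect the genuinely delicate step to be not the algebra but the justification that the Case~2.2 matching bound transfers to $\sigma_1$, which carries the two forced $(J^q,J^q)$ batches: I would argue that the second move acts only on the reduced, maximally matched part of $\sigma_1$ and leaves those forced batches intact, so $a(q-p)\le p+s$ does hold. This establishes that $\sigma_1$ cannot be improved by the move, hence is optimal and a single pass of Algorithm~\ref{Tux} suffices.
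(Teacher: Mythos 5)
Your proposal is correct and rests on the same two pillars as the paper's proof: the four ``type'' formulas for $Z$ from Figure~\ref{ft}, and the maximum-weighted-matching inequality $(a+1)p+s\ge aq$ (your $a(q-p)\le p+s$) that bounds how many $q$-batches a single move can profitably dissolve. What you do differently is the organization. The paper parametrizes everything relative to the original $\sigma$, allows $m-1$ further applications with cumulative parameter $\alpha=\sum_i a_i$, and then grinds through sixteen subcases (A11--A44, half of them dismissed as ``similar''), with $\sigma_m$ allowed to be of any of the four types. You instead parametrize relative to $\sigma_1$, observe that one further move flips the parity of $n_p+n_q$ so that $\sigma_2$ can only be of Type~2 or~3, reduce the Type~4 starting configuration verbatim to Case~2.2 of the main proof, and are left with exactly two short computations for Type~1 (your differences $-\tfrac a2(q-p)+p+s$ and $-\tfrac{a+1}2(q-p)+p+s$ check out, as do the resulting bounds $\tfrac{p+s}2$ and $\tfrac{2p+s-q}2$). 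This buys a much cleaner argument, and you are also more honest than the paper on two points it glosses over: the transfer of the matching bound to $\sigma_1$ (the paper simply asserts $2a_ip+2a_is\ge a_iq+(a_i-1)p+(2a_i-1)s$ for all $i$), and the possibility of equality at the boundary $2p+s=q$, where the paper's strictness arguments (e.g.\ multiplying by $1-a_1$, which vanishes for $a_1=1$) are themselves shaky. The one caveat: the paper's proof establishes the stronger fact $Z(\sigma_1)\le Z(\sigma_m)$ for \emph{any} number of further applications, and it is this stronger form that the concluding optimality argument of the theorem implicitly uses; your argument covers only the literal ``second time'' of the statement, so to recover the paper's full claim you would need to iterate your bound or redo the computation with $\sigma_m$ of all four types, as the paper does.
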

   
\begin{proof}
If one between $n_q$ and $n_{p}$ is even and the other is odd, ($\sigma_1$ be the solution found at iteration $1$ of Algorithm \ref{Tux}, such that, we delete from $\sigma$, $a_1$ batches of processing time $q$ 
 and we add $(a_1+1)$ batches of processing time $p$), so, for all value of $a_1$, the schedule $\sigma_1$ is the type 1 or the type 4. Let $\sigma_m$ be the solution found if we apply the Algorithm \ref{Tux},  $(m-1)$ times to the solution $\sigma_1$.  So, we delete from $\sigma$, $\alpha=\sum_{i=1}^{m} a_i$ batches of processing time $q$ and we add $(\alpha+m)$ batches of   processing time $p$.  So, for all value of $\alpha$ and $m$, the schedule $\sigma_m$ is of type 1 or type 2 or type 3 or type 4.

 We have, $2a_i p+2a_i s\geq a_iq+(a_i-1)p+(2a_i-1)s$ for $i=1,\ldots,m$. Thus, $(a_i+1)p+s\geq a_iq$. Then, $\sum_{i=1}^m ((a_i+1)p+s)\geq  \sum_{i=1}^m a_iq$. So, $(\alpha+m)p+ms\geq\alpha q$.\\

The possible values for  $\sigma_m$ are:
 
\begin{itemize}
    \item \textbf{If $n_q$ is even and $n_p$ is odd:} if $\sigma_m$ is of
    \begin{itemize}
        \item type 1:  $Z(\sigma_m)=\left(\dfrac{2K-\alpha}{2}\right)q+\left(\dfrac{2C+\alpha+m+1}{2}\right)p+\left(K+C+\dfrac{m-1}{2}\right)s$.
        \item type 2:
        $Z(\sigma_m)=\left(\dfrac{2K-\alpha}{2}\right)q+\left(\dfrac{2C+\alpha+m+2}{2}\right)p+\left(K+C+\dfrac{m}{2}\right)s$.
        \item type 3:
        $Z(\sigma_m)=\left(\dfrac{2K-\alpha-1}{2}\right)q+\left(\dfrac{2C+\alpha+m+3}{2}\right)p+\left(K+C+\dfrac{m}{2}\right)s$.
        \item type 4:
        $Z(\sigma_m)=\left(\dfrac{2K-\alpha+1}{2}\right)q+\left(\dfrac{2C+\alpha+m}{2}\right)p+\left(K+C+\dfrac{m-1}{2}\right)s$.
        
    \end{itemize}
    \item \textbf{If $n_q$ is odd and $n_p$ is even:} If $\sigma_m$ is of
    \begin{itemize}
        \item type 1: $Z(\sigma_m)=\left(\dfrac{2K-\alpha+1}{2}\right)q+\left(\dfrac{2C+\alpha+m}{2}\right)p+\left(K+C+\dfrac{m-1}{2}\right)s$.
        \item type 2: $Z(\sigma_m)=\left(\dfrac{2K-\alpha+1}{2}\right)q+\left(\dfrac{2C+\alpha+m+1}{2}\right)p+\left(K+C+\dfrac{m}{2}\right)s$.
        \item type 3: $Z(\sigma_m)=\left(\dfrac{2K-\alpha}{2}\right)q+\left(\dfrac{2C+\alpha+m+2}{2}\right)p+\left(K+C+\dfrac{m}{2}\right)s$. 
        \item type 4:  $Z(\sigma_m)=\left(\dfrac{2K-\alpha+2}{2}\right)q+\left(\dfrac{2C+\alpha+m-1}{2}\right)p+$\\ \hspace*{4.5cm} $\left(K+C+\dfrac{m-1}{2}\right)s$.
    \end{itemize}
\end{itemize}
 
To show that $Z(\sigma_1)\leq Z(\sigma_m)$,  we check all the possible cases:
 
\begin{itemize}    
 \item \textbf{If $n_q$ is even and $n_p$ is odd:}
\begin{itemize}
\item[•]\textbf{$\sigma_1$ is of type 1 and  $\sigma_m$ is of type 1} (subcase A11)\\

$\left(\dfrac{2K-a_1}{2}\right)q+\left(\dfrac{2C+a_1+2}{2}\right)p+(K+C)s\leq\left(\dfrac{2K-\alpha}{2}\right)q+\left(\dfrac{2C+\alpha+m+1}{2}\right)p+\left(K+C+\dfrac{m-1}{2}\right)s$ if, and only if,  $\left(\dfrac{a_1-\alpha-m+1}{2}\right)p-\left(\dfrac{m-1}{2}\right)s\leq\left(\dfrac{a_1-\alpha}{2}\right)q$.
 From the hypothesis:  $\sum_{i=1}^{m-1}(( a_i+1)p+s)\geq \sum_{i=1}^{m-1} a_i q$, if we apply Algorithm \ref{Tux}  $(m-2)$ times to   $\sigma_1$, $(\alpha-a_m+m-1)p+(m-1)s\geq(\alpha-a_m) q$ and we have   $Min_{i=1..m}\lbrace a_i\rbrace =a_m$,  then $(a_m-a_1)p>(a_m-a_1)q$, so ($\alpha-a_m+m-1)p+(a_m-a_1)p+(m-1)s>(\alpha-a_m) q+(a_m-a_1)q$ we obtain  $\left(\dfrac{a_1-\alpha-m+1}{2}\right)p-\left(\dfrac{m-1}{2}\right)s<\left(\dfrac{a_1-\alpha}{2}\right) q$, so $Z(\sigma_1)< Z(\sigma_m)$.\\

\item[•]\textbf{$\sigma_1$ is of type 1 and  $\sigma_m$ is of type 2}  (subcase A12)\\

 $\left(\dfrac{2K-a_1}{2}\right)q+\left(\dfrac{2C+a_1+2}{2}\right)p+(K+C)s\leq\left(\dfrac{2K-\alpha}{2}\right)q+\left(\dfrac{2C+\alpha+m+2}{2}\right)p$ $+\left(K+C+\dfrac{m}{2}\right)s$ if, and only if,  $\left(\dfrac{a_1-\alpha-m}{2}\right)p-\dfrac{m}{2}s\leq \left(\dfrac{a_1-\alpha}{2}\right)q$.
From the hypothesis:  $(\alpha+m)p+ms\geq\alpha q$, then $(\alpha+m-a_1)p+ms>(\alpha-a_1) q$, because $-a_1 p>-a_1 q$, then \\$\left(\dfrac{a_1-\alpha-m}{2}\right)p-\dfrac{m}{2}s<\left(\dfrac{a_1-\alpha}{2}\right) q$, so $Z(\sigma_1)<Z(\sigma_m)$.\\

\item[•]\textbf{$\sigma_1$ is of type 1 and  $\sigma_m$ is of type 3} (subcase A13)\\

$\left(\dfrac{2K-a_1}{2}\right)q+\left(\dfrac{2C+a_1+2}{2}\right)p+(K+C)s\leq\left(\dfrac{2K-\alpha-1}{2}\right)q+$\\ {\flushright $\left(\dfrac{2C+\alpha+m+3}{2}\right)p+\left(K+C+\dfrac{m}{2}\right)s$\\} if, and only if,  $\left(\dfrac{a_1-\alpha-m-1}{2}\right)p-\left(\dfrac{m}{2}\right)s\leq\left(\dfrac{a_1-\alpha-1}{2}\right)q$.
From the  hypothesis:  $(\alpha+m)p+ms\geq \alpha q$, then  $\left(\dfrac{a_1-\alpha-m-1}{2}\right)p-\left(\dfrac{m}{2}\right)s<\left(\dfrac{a_1-\alpha-1}{2}\right) q$,  because $(1-a_1)p>(1-a_1)q$, so $Z(\sigma_1)< Z(\sigma_m)$.\\

\item[•]\textbf{$\sigma_1$ is of type 1 and  $\sigma_m$ is of type 4} (subcase A14)\\

 $\left(\dfrac{2K-a_1}{2}\right)q+\left(\dfrac{2C+a_1+2}{2}\right)p+\left(K+C\right)s$ $\leq\left(\dfrac{2K-\alpha+1}{2}\right)q+\left(\dfrac{2C+\alpha+m}{2}\right)p$ {\flushright $+\left(K+C+\left(\dfrac{m-1}{2}\right)\right)s$\\} if, and only if,  $\left(\dfrac{a_1-\alpha-m+2}{2}\right)p-\left(\dfrac{m-1}{2}\right)s\leq\left(\dfrac{a_1-\alpha+1}{2}\right)q$.\\
We have shown that $\left(\dfrac{a_1-\alpha-m}{2}\right)p-\left(\dfrac{m-1}{2}\right)s<\left(\dfrac{a_1-\alpha-1}{2}\right)q$, then \\
$\left(\dfrac{a_1-\alpha-m}{2}\right)p+p-\left(\dfrac{m-1}{2}\right)s<\left(\dfrac{a_1-\alpha-1}{2}\right)q+q$, so $Z(\sigma_1)<Z(\sigma_m)$.\\

\item[•]\textbf{$\sigma_1$ is of type 4 and  $\sigma_m$ is of type 1} (subcase A41)\\

$\left(\dfrac{2K-a_1+1}{2}\right)q+\left(\dfrac{2C+a_1+1}{2}\right)p+\left(K+C\right)s$ $\leq\left(\dfrac{2K-\alpha}{2}\right)q+$\\ {\flushright $\left(\dfrac{2C+\alpha+m+1}{2}\right)p+\left(K+C+\left(\dfrac{m-1}{2}\right)\right)s$\\} if, and only if,  $\left(\dfrac{a_1-\alpha-m}{2}\right)p-\left(\dfrac{m-1}{2}\right)s\leq\left(\dfrac{a_1-\alpha-1}{2}\right)q$.\\
  From the hypothesis, $(\alpha-a_m+m-1)p+(m-1)s\geq(\alpha-a_m) q$, then $(\alpha-a_m+m-1)p+(a_m-a_1+1)p+(m-1)s>(\alpha-a_m) q+(a_m-a_1+1)q$, because $(a_m-a_1+1)p>(a_m-a_1+1)q$, so $Z(\sigma_1)<Z(\sigma_m)$.\\

\item[•]\textbf{$\sigma_1$ is of type 4 and $\sigma_m$ is of type 2} (subcase A42), similar to subcase A13.

\item[•]\textbf{$\sigma_1$ is of type 4 and  $\sigma_m$ is of type 3} (subcase A43)\\

  $\left(\dfrac{2K-a_1+1}{2}\right)q+\left(\dfrac{2C+a_1+1}{2}\right)p+\left(K+C\right)s$  $\leq\left(\dfrac{2K-\alpha-1}{2}\right)q+$\\  {\flushright $\left(\dfrac{2C+\alpha+m+3}{2}\right)p+\left(K+C+\dfrac{m}{2}\right)s$\\} if, and only if,  $\left(\dfrac{a_1-\alpha-m-2}{2}\right)p-\left(\dfrac{m}{2}\right)s\leq\left(\dfrac{a_1-\alpha-2}{2}\right)q$.\\
From the hypothesis:  $(\alpha+m)p+ms\geq\alpha q$, then $(\alpha+m)p+(2-a_1)p+ms>\alpha q+(2-a_1)q$, because $(2-a_1)p>(2-a_1)q$, so $Z(\sigma_1)< Z(\sigma_m)$.

\item[•]\textbf{$\sigma_1$ is of type 4 and  $\sigma_m$ is of type 4} (subcase A44), similar to subcase A11.
\end{itemize}
\item \textbf{If $n_q$ is odd and $n_p$ is even:}
\begin{itemize}

\item[•]\textbf{$\sigma_1$ is of type 1 and  $\sigma_m$ is of type 1}, similar to subcase A11.

\item[•]\textbf{$\sigma_1$ is of type 1 and  $\sigma_m$ is of type 2}, similar to subcase A12.

\item[•]\textbf{$\sigma_1$ is of type 1 and  $\sigma_m$ is of type 3}, similar to subcase A13.

\item[•]\textbf{$\sigma_1$ is of type 1 and  $\sigma_m$ is of type 4}, similar to subcase A14.

\item[•]\textbf{$\sigma_1$ is of type 4 and  $\sigma_m$ is of type 1}, similar to subcase A41.
    
\item[•]\textbf{$\sigma_1$ is of type 4 and  $\sigma_m$ is of type 2}, similar to subcase A13.
  
\item[•]\textbf{$\sigma_1$ is of type 4 and  $\sigma_m$ is of type 3}, similar to subcase A43.

\item[•]\textbf{$\sigma_1$ is of type 4 and  $\sigma_m$ is of type 4}, similar to subcase A11.
\end{itemize}
\end{itemize}
\end{proof}

From Lemma 1 and Lemma 2, we conclude that Algorithm \ref{Tux} gives an optimal solution. As the three nested loops require a time in $O(n^6)$ and the matching algorithm a time in $O(n^3)$, then Algorithm \ref{Tux} requires a time in $O(n^9)$ but this complexity, calculated in the worst case, can be reduced. 
\end{proof}

\section{Mathematical programming model}

In this section, we present a mixed integer programming model (MILP) to solve the
scheduling problem $Bm, max|G=(V,E),b=2, s|C_{max}$.

Let $pb_{ij}=max\lbrace p_i, p_j\rbrace$ if the two jobs $J_i$ and $J_j$ are compatible and let $A$ be  the adjacency matrix of the graph $G=(V, E)$ such that:\\

$a_{ij}= \left\{\begin{tabular}{ll}
1  &  if the jobs  $J_i$  and $J_j$ are compatible,  $i,j = 1,\ldots,n$\\ 
0  & otherwise\\
\end{tabular} \right.$
\\

\textbf{Decision variables:} Define two types of variables,\\
 
$x_{ijk}= \left\{\begin{tabular}{ll}
1  &  if the jobs $J_i$ and $J_j$ are processed in a same batch on machine $M_k$,\\ 
0  & otherwise \\
\end{tabular} \right. $

$i<j=1,\ldots,n$  and $ k=1,\ldots,m$.\\

$ y_{ik}= \left\{\begin{tabular}{ll}
1  &  if the job  $J_i$ is schedule on machine $M_k$ as a single batch of jobs,\\
0  & otherwise \\
\end{tabular} \right.$

$i=1\ldots,n$ and $k=1,\ldots,m$.\\
 
 \textbf{Linear model:} The linear model of the problem $Bm/G=(V,E),b=2, s/C_{max}$ is:\\

$minimize$  $C_{max}$

subject to 

$\left\lbrace \begin{tabular}{lll}
$\sum_{k=1}^{m}\sum_{j>i}^{n} x_{ijk}\leq 1$ & $i=1,\ldots,n$ & (1) \\
& & \\
$\sum_{k=1}^{m} x_{ijk} \leq a_{ij}$ & $i<j=1,\ldots,n$ & (2) \\
& & \\
$\sum_{k=1}^{m} y_{ik} \leq 1$ & $i=1,\ldots,n$ & (3)  \\
& & \\
$\sum_{k=1}^{m} \sum_{j>i}^{n} x_{ijk} = 1-\sum_{k=1}^{m} y_{ik}$ & $i=1,\ldots,n$ & (4)  \\
& & \\
$\sum_{i=1}^{n} \sum_{j>i}^{n} pb_{ij} x_{ijk}+\sum_{i=1}^{n} p_{i} y_{ik}$ & & \\
$+ s(\sum_{i=1}^{n} \sum_{j>i}^{n} x_{ijk}+\sum_{i=1}^{n}y_{ik}-1)\leq C_{max}$ & $k=1,\ldots,m$ & (5) \\
& & \\
$C_{max}\geq0$ & &  (6) \\
& & \\
$x_{ijk}\in\lbrace0,1\rbrace$ , $y_{ik}\in\lbrace0,1\rbrace$ & $i<j=1,\ldots,n$\\ & and $k=1,\ldots,m$ & (7)
\end{tabular}\right.$\\

Where
\begin{itemize}
\item[•] Equations (1): indicate that each job is, at most, in one batch of two jobs on a single machine.
\item[•]  Equations (2): ensure that every two compatible jobs are, at most,  in one batch of two jobs on a single machine.
\item[•] Equations (3): ensure that each single job batch is in at most one machine.
\item[•] Equations (4): indicate that each job is in one batch on one machine.
\item[•] $\sum_{i=1}^{n}\sum_{j>i}^{n} pb_{ij} x_{ijk}$: the sum of the batch processing times with two jobs on each machine.
\item[•] $\sum_{i=1}^{n}p_{i} y_{ik}$: the sum of the batch processing times with a single job on each machine.
\item[•] $s(\sum_{i=1}^{n}\sum_{j>i}^{n} x_{ijk}+\sum_{i=1}^{n}y_{ik}-1)$: the sum of the batch preparation times on each machine.
\item[•] Equations (6): ensure that the corresponding decision variable must be positive.
\item[•] Equations (7): ensure that the corresponding decision variables are binary.\\
\end{itemize}

The number of variables and the number of constraints of a linear mathematical model
are two indices by which we can measure the dimension and the efficiency of the given model. The number of variables is $m n(\dfrac{n+1}{2})$ binary variables and one continuous variable. The total number of constraints is $(n(\frac{n+5}{2})+m)$.

\subsection*{Performance of the linear formulation}

We study in this section the performance of the linear formulation that was tested with IBM ILOG CPLEX solver (20.1.0). For different values of  $n \in \lbrace 10, 20, 30, 40, 50, 60, 70, 80, 90, 100, 200\rbrace$, we considered 50 of randomly generated instances solved in less than 900 $s$ for each density (low, medium and high)  of the graph $G$. 

\begin{figure}[h!]
\begin{minipage}{0.49\linewidth}
\begin{tikzpicture}[x=0.7mm,y=1.1mm,scale=1.5]
\color{black}

\draw[->,>=latex] (0, 0) -- (56, 0) node[below]{};
\draw[->,>=latex] (0, 0) -- (0,22) node[left]{};

\draw plot[smooth,very thick,xscale=0.5,yscale=1]coordinates{(0,2)(12.5,2.586)(25,3.8) (50,7.5)(75,15.697) (100,18.318)};

\node[scale=0.8] at (-2, 2) {2};
\node[scale=1] at (0, 2) {-};
 
 \node[scale=0.8] at (-2, 4) {4};
\node[scale=1] at (0, 4) {-};

\node[scale=0.8] at (-2, 6) {6};
\node[scale=1] at (0, 6) {-};

\node[scale=0.8] at (-2, 8) {8};
\node[scale=1] at (0, 8) {-};

\node[scale=0.8] at (-2, 10) {10};
\node[scale=1] at (0, 10) {-};

\node[scale=0.8] at (-2, 12) {12};
\node[scale=1] at (0, 12) {-};

\node[scale=0.8] at (-2, 14) {14};
\node[scale=1] at (0, 14) {-};

\node[scale=0.8] at (-2, 16) {16};
\node[scale=1] at (0, 16) {-};

\node[scale=0.8] at (-2, 18) {18};
\node[scale=1] at (0, 18) {-};

\node[scale=0.8] at (-2, 20) {20};
\node[scale=1] at (0, 20) {-};
\node[scale=0.8] at (0, 23) {time (s)};

\node[scale=0.8] at (60, -1) {density $(\%)$};
\node[scale=0.8] at (50, -2) {100};
\node[scale=0.4] at (50, 0) {|};
\node[scale=0.8] at (25, -2) {50};
\node[scale=0.4] at (25, 0) {|};
\node[scale=0.8] at (12.5, -2) {25};
\node[scale=0.4] at (12.5, 0) {|};
\node[scale=0.8] at (6.25,-2) {12.5};
\node[scale=0.4] at (6.25,0) {|};
\node[scale=0.8] at (37.5,-2) {75};
\node[scale=0.4] at (37.5,0) {|};
\end{tikzpicture}
\caption{Performance of the model with $n=50$.}\label{fig1}
\end{minipage}
\begin{minipage}{0.49\linewidth}
\begin{tikzpicture}[x=0.7mm,y=1.1mm,scale=1.5]
\color{black}

\draw[ ->,>=latex] (0, 0) -- (53, 0) node[below]{};
\draw[->,>=latex] (0, 0) -- (0,24) node[left]{};

\draw plot[smooth,very thick,xscale=0.5,yscale=1]coordinates{(0,0)(10,1.06)(20,3.1) (30,3) (50,3.8)(60,6.22) (70,10.58) (80,9.52) (90,11.25) (100,21.37)};

\node[scale=0.8] at (-2, 2) {2};
\node[scale=1] at (0, 2) {-};
 
 \node[scale=0.8] at (-2, 4) {4};
\node[scale=1] at (0, 4) {-};

\node[scale=0.8] at (-2, 6) {6};
\node[scale=1] at (0, 6) {-};

\node[scale=0.8] at (-2, 8) {8};
\node[scale=1] at (0, 8) {-};

\node[scale=0.8] at (-2, 10) {10};
\node[scale=1] at (0, 10) {-};

\node[scale=0.8] at (-2, 12) {12};
\node[scale=1] at (0, 12) {-};

\node[scale=0.8] at (-2, 14) {14};
\node[scale=1] at (0, 14) {-};

\node[scale=0.8] at (-2, 16) {16};
\node[scale=1] at (0, 16) {-};

\node[scale=0.8] at (-2, 18) {18};
\node[scale=1] at (0, 18) {-};

\node[scale=0.8] at (-2, 20) {20};
\node[scale=1] at (0, 20) {-};

\node[scale=0.8] at (-2, 22) {22};
\node[scale=1] at (0, 22) {-};
\node[scale=0.8] at (0, 25) {time (s)};
\node[scale=0.8] at (5, -2) {10};
\node[scale=0.4] at (5, 0) {|};

\node[scale=0.8] at (10, -2) {20};
\node[scale=0.4] at (10, 0) {|};

\node[scale=0.8] at (15, -2) {30};
\node[scale=0.4] at (15, 0) {|};

\node[scale=0.8] at (20,-2) {40};
\node[scale=0.4] at (20,0) {|};

\node[scale=0.8] at (25,-2) {50};
\node[scale=0.4] at (25,0) {|};

\node[scale=0.8] at (30,-2) {60};
\node[scale=0.4] at (30,0) {|};

\node[scale=0.8] at (35,-2) {70};
\node[scale=0.4] at (35,0) {|};

\node[scale=0.8] at (40,-2) {80};
\node[scale=0.4] at (40,0) {|};

\node[scale=0.8] at (45,-2) {90};
\node[scale=0.4] at (45,0) {|};

\node[scale=0.8] at (50,-2) {100};
\node[scale=0.4] at (50,0) {|};

\node[scale=0.8] at (53,-1) {$n$};

\end{tikzpicture}
\caption{Performance of the model with low density.}\label{fig2}
\end{minipage}
\end{figure}

\begin{figure}[h!]
\begin{minipage}{0.49\linewidth}
\begin{tikzpicture}[x=0.7mm,y=0.9mm,scale=1.5]
\color{black}
\draw[->,>=latex] (0, 0) -- (52, 0) node[below]{};
\draw[->,>=latex] (0, 0) -- (0,38) node[left]{};

\draw plot[smooth,very thick,xscale=0.25,yscale=0.5]coordinates{(0,0)(10,1.37)(20,4.1) (30,3.57) (40,4.94)(50,7.18)(60,9.95) (70,21.9) (80,22.79) (90,19.76) (100,17.51) (200,69.17)};

\node[scale=1] at (0,0.5*5) {-};

\node[scale=0.8] at (-2,0.5*10) { 10};
\node[scale=1] at (0,0.5*10) {-};

\node[scale=1] at (0,0.5*15) {-};

\node[scale=0.8] at (-2,0.5*20) {20};
\node[scale=1] at (0,0.5*20) {-};

\node[scale=1] at (0,0.5*25) {-};

\node[scale=0.8] at (-2,0.5*30) {30};
\node[scale=1] at (0,0.5*30) {-};

\node[scale=1] at (0,0.5*35) {-};

\node[scale=0.8] at (-2,0.5*40) {40};
\node[scale=1] at (0,0.5*40) {-};

\node[scale=1] at (0,0.5*45) {-};

\node[scale=0.8] at (-2,0.5*50) {50};
\node[scale=1] at (0,0.5*50) {-};

\node[scale=1] at (0,0.5*55) {-};

\node[scale=0.8] at (-2,0.5*60) {60};
\node[scale=1] at (0,0.5*60) {-};

\node[scale=1] at (0,0.5*65) {-};

\node[scale=0.8] at (-2,0.5*70) {70};
\node[scale=1] at (0,0.5*70) {-};
\node[scale=0.8] at (0,0.5*78) {time (s)};

\node[scale=0.8] at (0.25*10, -2) {10};
\node[scale=0.4] at (0.25*10, 0) {|};

\node[scale=0.4] at (0.25*20, 0) {|};

\node[scale=0.4] at (0.25*30, 0) {|};

\node[scale=0.4] at (0.25*40,0) {|};

\node[scale=0.8] at (0.25*50,-2) {50};
\node[scale=0.4] at (0.25*50,0) {|};

\node[scale=0.4] at (0.25*60,0) {|};

\node[scale=0.4] at (0.25*70,0) {|};

\node[scale=0.4] at (0.25*80,0) {|};

\node[scale=0.4] at (0.25*90,0) {|};

\node[scale=0.8] at (0.25*101,-2) {100};
\node[scale=0.4] at (0.25*100,0) {|};

\node[scale=0.8] at (0.25*200,-2) {200};
\node[scale=0.4] at (0.25*200,0) {|};
\node[scale=0.8] at (0.265*200,-1) {$n$};
\end{tikzpicture}
\caption{Performance of the model with medium density.}\label{fig3}
\end{minipage}
\begin{minipage}{0.49\linewidth}
\begin{tikzpicture}[x=0.7mm,y=0.9mm,scale=1.5]
\color{black}
\draw[->,>=latex] (0, 0) -- (53, 0) node[below]{};
\draw[->,>=latex] (0, 0) -- (0,39) node[left]{};

\draw plot[smooth,very thick,xscale=0.5,yscale=0.37]coordinates{(0,0)(10,1.53)(20,5.1) (30,4.06) (40,6.84)(50,18.31)(60,20.89) (70,39.54) (80,59.54) (90,80.41) (100,100)};

\node[scale=1] at (0,0.37*5) {-};

\node[scale=0.8] at (-2,0.37*10) { 10};
\node[scale=1] at (0,0.37*10) {-};

\node[scale=1] at (0,0.37*15) {-};

\node[scale=0.8] at (-2,0.37*20) {20};
\node[scale=1] at (0,0.37*20) {-};

\node[scale=1] at (0,0.37*25) {-};

\node[scale=0.8] at (-2,0.37*30) {30};
\node[scale=1] at (0,0.37*30) {-};

\node[scale=1] at (0,0.37*35) {-};

\node[scale=0.8] at (-2,0.37*40) {40};
\node[scale=1] at (0,0.37*40) {-};

\node[scale=1] at (0,0.37*45) {-};

\node[scale=0.8] at (-2,0.37*50) {50};
\node[scale=1] at (0,0.37*50) {-};

\node[scale=1] at (0,0.37*55) {-};

\node[scale=0.8] at (-2,0.37*60) {60};
\node[scale=1] at (0,0.37*60) {-};

\node[scale=1] at (0,0.37*65) {-};

\node[scale=0.8] at (-2,0.37*70) {70};
\node[scale=1] at (0,0.37*70) {-};

\node[scale=1] at (0,0.37*75) {-};

\node[scale=0.8] at (-2,0.37*80) {80};
\node[scale=1] at (0,0.37*80) {-};

\node[scale=1] at (0,0.37*85) {-};

\node[scale=0.8] at (-2,0.37*90) {90};
\node[scale=1] at (0,0.37*90) {-};

\node[scale=1] at (0,0.37*95) {-};

\node[scale=0.8] at (-2.4,0.37*100) {100};
\node[scale=1] at (0,0.37*100) {-};

\node[scale=0.8] at (0,0.40*100) {time (s)};

\node[scale=0.8] at (5, -2) {10};
\node[scale=0.4] at (5, 0) {|};

\node[scale=0.8] at (10, -2) {20};
\node[scale=0.4] at (10, 0) {|};

\node[scale=0.8] at (15, -2) {30};
\node[scale=0.4] at (15, 0) {|};

\node[scale=0.8] at (20,-2) {40};
\node[scale=0.4] at (20,0) {|};

\node[scale=0.8] at (25,-2) {50};
\node[scale=0.4] at (25,0) {|};

\node[scale=0.8] at (30,-2) {60};
\node[scale=0.4] at (30,0) {|};

\node[scale=0.8] at (35,-2) {70};
\node[scale=0.4] at (35,0) {|};

\node[scale=0.8] at (40,-2) {80};
\node[scale=0.4] at (40,0) {|};

\node[scale=0.8] at (45,-2) {90};
\node[scale=0.4] at (45,0) {|};

\node[scale=0.8] at (50,-2) {100};
\node[scale=0.4] at (50,0) {|};
\node[scale=0.8] at (53,-1) {$n$};
\end{tikzpicture}
\caption{Performance of the model with high density.}\label{fig4}
\end{minipage}
\end{figure}

In Figure \ref{fig1} is pictured the mean time to obtain an optimal solution, for a fixed instance size $n=50$, for each density. In Figures \ref{fig2}, \ref{fig3} and \ref{fig4} are pictured the mean time for following density, low,  medium and  high respectively.
The  above  linear formulation solved almost all the instances in a reasonable amount of time in the mean. We can see that in the case of fixed instances size $n$,   the mean time to obtain an optimal solution increases  in parallel with the density of the compatibility graph, see Figure \ref{fig1}. As expected, Figures \ref{fig2}, \ref{fig3} and \ref{fig4} indicated that for a fixed density of the graph $G$, the  mean time to obtain an optimal solution increases  in parallel with the number of jobs. 

In conclusion, notice that for $n=200$, CPLEX solver solves $60\%$ of the generated instances in $900$ seconds, for $n=100$ it solves $83\%$ and for $n=50$ it solves $100\%$ in few time. All the tests carried out have shown the effectiveness of the model.

\section{Heuristics for the problem $Bm, max|G=(V,E),b=2, s|C_{max}$}

In this section, we propose two heuristics H1 and H2 for the sake of obtaining approximate solutions to the problem $Bm, max|G=(V,E),b=2, s|C_{max}$. The pseudo-code of
each heuristic is preceded by a brief description of the idea behind it.

\subsection{Maximum weighted matching based heuristic (H1)}

This heuristic starts with an initial feasible solution corresponding to the maximum weighted matching in the graph $H_{\alpha}=(G,\alpha)$,  where each edge is valued by the minimum between the processing times of these corresponding jobs. We schedule the batches at the first available machine according to the decreasing order of their processing times.  At each iteration of the heuristic algorithm, we make a random permutation between the jobs of the batches respecting the compatibility constraint,  we swap randomly  the batches of the machines and we  save the best obtained solution.

\begin{algorithm}[h!]
\footnotesize 
\caption{\textbf{H1}}
\KwIn{$G=(V, E)$, $p_i$, $s$} 
\KwResult{ $\sigma_1$}
Apply the Algorithm \ref{alg5} (let $C_{max}$ be the value of the solution given by Algorithm IS)\;
 Let $C=\lbrace M_k, C_{max}^k=C_{max}\rbrace$\;
\For {each $M_k \in C$ }
{\For{each two batches $B_i=(J_i^1, J_i^2)$ and $B_j=(J_j^1, J_j^2)$ in $M_k$}
  {\If {$p_i^1<p_j^2$ and $p_j^1<p_i^2$}
  {\If {$(J_i^{2}, J_j^{2}) \in E$ and there is a machine $ M_l$ where $C_{max}^l+ Min \lbrace p_i^1, p_j^1 \rbrace +s \leq C_{max} $}
  {
  Replace the batches $B_i$ and $B_j$ of $M_k$ by the batch $(J_i^2, J_j^2)$ and a batch that contains one of the jobs $J_i^1$ or $J_j^1$  of the  maximum processing time, and schedule the other job on the machine $M_l$\;
  
  }
  }
  } 
Select a machine $M_r$ such that $ C_{max}^r <C_{max}$ and two batches $B_i$ in $M_k$ and $B_j$ in $M_r$;\\
  \If {$p_i^1<p_j^2$ and $p_j^1<p_i^2$}
  {\If {$(J_i^{2}, J_j^{2}) \in E$ and there is a machine $ M_l$ where $C_{max}^l+ Min \lbrace p_i^1, p_j^1 \rbrace +s \leq C_{max} $}
  {
  Replace the batch $B_i$  by a batch that contains one of the jobs $J_i^1$ or $J_j^1$  of the  maximum processing time\;
  Replace the batch $B_j$  by the batch $(J_i^2,J_j^2)$\;
 Add to the machine $M_l$ a batch that contains the other job\;
  Save the best solution\;
  
  }
  
  }
  
  }
\For {every two machines $M_k$ and $M_l$} 
{
Select randomly a job $J_i^a$ $\in$ $B_i$ and a job $J_j^b$ $\in$ $B_j$\;
\If {$(J_i^{a}, J_j^{b}) \in E$}
  {
Select randomly a number $r\in \lbrace k,l\rbrace$, and add the batch $(J_i^a,J_j^b)$ to the machine $M_r$ and add to the other machine a batch which contains the two remaining jobs if they are adjacent, otherwise schedule each one in a batch\;
Save the best solution\;
  }
}
Form a batch list of a single job\;
Arrange the jobs of the list in the decreasing order of their processing times\;
Select the job at the top of the list and place it at the first available batch. If the job cannot be processed in any one of the existing batches, create a new batch\;
Reschedule all batches on the first available machine according to the decreasing order of their processing times\;
\For{ $counter =1..iter$}
{
 Select randomly  two machines  $M_k$ and $M_l$;\\
 Select randomly a batch $B_i$ $\in$ $M_k$ and a batch $B_j$ $\in$ $M_l$;\\ 
 Swap the two batches of the two machines;\\ 
 Save the best solution;
}
\end{algorithm}


\subsection{Long processing times heuristic (H2)}

This heuristic based on the sequence of  jobs ordered according to the decreasing order of their processing times. We select the job at the top of the list and place it in the first available batch respecting the compatibility constraint. If the job cannot be processed in any one of the existing batches, create a new batch. We schedule the batches on the first available machine and for each iteration, we swap  randomly two batches of the two machines. We save the best solution.

\begin{algorithm}[h!]
\caption{\textbf{H2}}
\KwIn{$G=(V, E)$, $p_i$, $s$} 
\KwResult{ $\sigma_2$}

Arrange the jobs in the decreasing order of their processing times (resulting in
jobs $J_1,...,J_n$);\\
$h:=1$; $B_h:=J_1$;\\
\For {$i=2$ to $n$}
{
\If{(there exists a batch (let $B_j$)such that $\vert B_j\vert =1$ and the job of this batch is
compatible with the job $J_i$)}
{
$B_j:=B_j\cup J_i$;\\
\Else{$h:=h+1;$\\
$B_h:=J_i$;}
}

}

\For{ counter $=1$ to $iter$}
{
 Select randomly  two machines  $M_k$ and $M_l$;\\
 Select randomly a batch $B_i$ $\in$ $M_k$ and a batch $B_j$ $\in$ $M_l$;\\ 
 Swap the two batches of the two machines;\\ 
 Save the best solution;
}
\end{algorithm}

\section{Experimental results}

In this section, we present the results of the computational experiments performed.  We first give details about both software and hardware setup used to run those experiments. Then, we examine the performance of the heuristic approaches H1 and H2.

We coded the above heuristics in C++ language. The tests were executed on a personal computer with an Intel(R) Core(TM) i3-5010U 2.10GHz processor with 4GB of RAM and run under the Windows 7
operating system. We constructed  problem instances with randomly generated data. We considered the number of jobs $n\in\lbrace10, 20, 30, 40, 50, 60, 70, 80, 90, 100, 200, 300, 400\rbrace$, the number of machines $m\in\lbrace 2, 3, 4, 5\rbrace$ and we have used the parameter $d\in\lbrace 12.5, 25, 50, 75, 100\rbrace$ which is the density (in percentage) of the compatibility graph.  The processing times and  the set-up times  are randomly generated in  $\left[ 10, 100\right]$ and $\left\lbrace 2, 3, 4\right\rbrace$ respectively. For each number $n$ of jobs,  we considered 50 of randomly generated instances  according to the uniform distribution.

Table \ref{table1} that summarizes the experimental study shows the efficiency of  H1 and H2. We recorded the minimum, mean and maximum errors for the results 
produced by the heuristics. The performance ratio of a solution $\sigma$ is denoted by $GAP(\sigma)=\left( C_{max}(\sigma)-BS \right)/BS$, where $C_{max}(\sigma)$ and $BS$ are the maximum completion times of the proposed methods and the best solution value obtained by Cplex, respectively. The running times of the algorithms were reported in minimum, mean and maximum  in seconds, and for each heuristic we calculated the number of best solutions ($\#$sol). 

The experimental results presented in Table \ref{table1} and Figures \ref{fig12}, \ref{fig13} and \ref{fig14}, clearly show that  H1 and H2  give near-optimal solutions, and that  H1 is more efficient than H2  in the terms of the  number of the best solutions and gap.  However, it is clear that H1 consumes much more time, compared to H2, to provide an approximate solution. Prevented us from using H2.

\begin{table}[h!]
\footnotesize 
\caption{The performance of the heuristics.}\label{table1}
\begin{center}\renewcommand{\arraystretch}{1}
\begin{tabular}{c c c c c c c c}
\hline
$n$    & Fnct& $\#$Sol H1 &time(s) H1&GAP1 &  $\#$Sol H2 & time(s) H2&GAP2 \\

\hline
10& Min& 50& 11.201& 0&50&2,246&0 \\
   &Mean& 50& 15.667&0&50&2,6886&0 \\
   &Max& 50& 24.539& 0&50&3,229&0\\ \hline

20& Min& 49& 12.745& 0&15&2,761&0 \\
   &Mean& 49& 16.542&0.002&15&3,365&0.010 \\
   &Max& 49& 35.459& 0.020&15&4,352&0.062\\ \hline

30& Min& 46& 15.257& 0&36&3.058&0 \\
   &Mean& 46& 36.108&0.001&36&3.727&0.003 \\
  &Max&46& 44.304& 0.012&36&4.477&0.020\\ \hline

40& Min& 49& 33.306& 0&2&3,463&0 \\
   &Mean&49&41.086&0.003&2&4,385&0.011 \\
  &Max&49& 51.121& 0.023&2&5,164&0.040\\ \hline

50& Min&49&37.190& 0&4&4,181&0 \\
  &Mean&49&49.613&0.002&4&5,259&0.020 \\
  &Max&49&61.979& 0.007&4&6,536&0.045\\ \hline

60& Min& 43&44.866& 0&38&4,259&0 \\
  &Mean&43&58.135&0.001&38&5,662&0.002 \\
  &Max&43&69.888& 0.011&38&7,145&0.014\\ \hline

70& Min& 46&30.576& 0&12&4,992&0 \\
  &Mean& 46&46.303&0.002&12&6,634&0.007 \\
  &Max&46& 75.067& 0.010&12&7,722&0.018\\ \hline

80& Min& 49&34.133& 0&7&5,678&0 \\
  &Mean&49&54.321&0.002&7&7,592&0.013 \\
  &Max&49&80.637& 0.010&7&9,391&0.031\\ \hline

90& Min& 19&38.423& 0&45&5,288&0 \\
  &Mean& 19&49.759&0.002&45&7,813&0.001 \\
  &Max& 19&69.561& 0.011&45&9,469&0.005\\ \hline

100& Min&47& 41.324& 0&8&5,460&0.003 \\
  &Mean&47&89.283&0.003&8&8,030&0.010 \\
  &Max&47&109.777& 0.010&8&10,312&0.019\\ \hline

200& Min&40&130.962& 0.0003&14&9,266&0.0006 \\
  &Mean&40&309.600&0.002&14&13,772&0.004 \\
  &Max&40&534.722& 0.009&14&19,968&0.012\\ \hline

300& Min& 13&554.984& 0.0003&48&16,759&0.0002 \\
  &Mean&13&1305.073&0.002&48&24,172&0.001 \\
  &Max&13&3929.04& 0.005&48&32,869&0.005\\ \hline

400& Min& 42&1542,734& 0&14&21,746&0.0009 \\
  &Mean& 42&4995,794&0.001&14&33,706&0.003 \\
  &Max&42& 12955,851& 0.005&14&47,330&0.004\\ \hline
\end{tabular}
\end{center}
\end{table}

\begin{figure}[h!]
\begin{minipage}{0.325\linewidth}
\begin{tikzpicture}[x=0.7mm,y=1mm,scale=0.9]
\color{black}
\draw[->,>=latex] (0, 0) -- (60, 0) node[below]{};
\draw[->,>=latex] (0, 0) -- (0,38) node[left]{};

\draw plot[smooth,very thick,xscale=0.14,yscale=0.007]coordinates{(10,15.667)(20,16.542) (30,36.108) (40,41.086)(50,49.613)(60,58.135) (70,46.303) (80,54.321) (90,49.759) (100,89.283) (200,309.6) (300,1305.073) (400,4995.794)};

\draw[red] plot[smooth,very thick,xscale=0.14,yscale=0.007]coordinates{(10,2.688)(20,3.365) (30,3.727) (40,4.385)(50,5.259)(60,5.662) (70,6.634) (80,7.592) (90,7.813) (100,8.03) (200,13.772) (300,24.172) (400,33.706)};


\node[scale=0.8] at (0.14*20, -2.5) {20};
\node[scale=0.4] at (0.14*20, 0) {|};

\node[scale=0.4] at (0.14*40,0) {|};


\node[scale=0.4] at (0.14*60,0) {|};


\node[scale=0.4] at (0.14*80,0) {|};


\node[scale=0.8] at (0.14*100,-2.5) {100};
\node[scale=0.4] at (0.14*100,0) {|};

\node[scale=0.8] at (0.14*200,-2.5) {200};
\node[scale=0.4] at (0.14*200,0) {|};

\node[scale=0.8] at (0.14*300,-2.5) {300};
\node[scale=0.4] at (0.14*300,0) {|};

\node[scale=0.8] at (0.14*400,-2.5) {400};
\node[scale=0.4] at (0.14*400,0) {|};
\node[scale=0.8] at (0.14*440,-1) {$n$};

\node[scale=0.8] at (-6,0.007*1000) {1000};
\node[scale=1] at (0,0.007*1000) {-};

\node[scale=0.8] at (-6,0.007*2000) { 2000};
\node[scale=1] at (0,0.007*2000) {-};

\node[scale=0.8] at (-6,0.007*3000) {3000};
\node[scale=1] at (0,0.007*3000) {-};

\node[scale=0.8] at (-6,0.007*4000) {4000};
\node[scale=1] at (0,0.007*4000) {-};

\node[scale=0.8] at (-6,0.007*5000) {5000};
\node[scale=1] at (0,0.007*5000) {-};
\node[scale=0.8] at (0,0.007*5600) {time (s)};

\node[scale=0.8] at (-5,0.007*500) {500};
\node[scale=1] at (0,0.007*500) {-};

\node[scale=1] at (0,0.007*250) {-};

\node[scale=0.8] at (-5,0.007*90) {100};
\node[scale=1] at (0,0.007*100) {-};

\node[] at (405*0.14,0.007*5200) {\textbf{H1}};

\node[red] at (405*0.14,0.007*250) {\textbf{H2}};

\end{tikzpicture}
\caption{H1 vs. H2 for times.}\label{fig12}
\end{minipage}
\begin{minipage}{0.325\linewidth}
\begin{tikzpicture}[x=0.7mm,y=1.3mm,scale=1]
\color{black}
\draw[->,>=latex] (0, 0) -- (60, 0) node[below]{};
\draw[->,>=latex] (0, 0) -- (0,27) node[left]{};

\draw plot[smooth,very thick,xscale=0.14,yscale=0.5]coordinates{(10,50)(20,49) (30,46) (40,49)(50,49)(60,43) (70,46) (80,49) (90,19) (100,47) (200,40) (300,13) (400,42)};

\draw[red] plot[smooth,very thick,xscale=0.14,yscale=0.5]coordinates{(10,50)(20,15) (30,36) (40,3)(50,4)(60,38) (70,12) (80,7) (90,45) (100,8) (200,14) (300,48) (400,14)};


\node[scale=0.8] at (0.14*20, -2) {20};
\node[scale=0.4] at (0.14*20, 0) {|};

\node[scale=0.4] at (0.14*40,0) {|};


\node[scale=0.4] at (0.14*60,0) {|};


\node[scale=0.4] at (0.14*80,0) {|};


\node[scale=0.8] at (0.14*100,-2) {100};
\node[scale=0.4] at (0.14*100,0) {|};

\node[scale=0.8] at (0.14*200,-2) {200};
\node[scale=0.4] at (0.14*200,0) {|};

\node[scale=0.8] at (0.14*300,-2) {300};
\node[scale=0.4] at (0.14*300,0) {|};

\node[scale=0.8] at (0.14*400,-2) {400};
\node[scale=0.4] at (0.14*400,0) {|};

\node[scale=0.75] at (0.14*435,-1) {$n$};

\node[scale=0.8] at (-3.5,0.5*5) {5};
\node[scale=1] at (0,0.5*5) {-};

\node[scale=0.8] at (-3.5,0.5*10) { 10};
\node[scale=1] at (0,0.5*10) {-};

\node[scale=0.8] at (-3.5,0.5*15) {15};
\node[scale=1] at (0,0.5*15) {-};

\node[scale=0.8] at (-3.5,0.5*20) {20};
\node[scale=1] at (0,0.5*20) {-};

\node[scale=0.8] at (-3.5,0.5*25) {25};
\node[scale=1] at (0,0.5*25) {-};

\node[scale=0.8] at (-3.5,0.5*30) {30};
\node[scale=1] at (0,0.5*30) {-};

\node[scale=0.8] at (-3.5,0.5*35) {35};
\node[scale=1] at (0,0.5*35) {-};

\node[scale=0.8] at (-3.5,0.5*40) {40};
\node[scale=1] at (0,0.5*40) {-};

\node[scale=0.8] at (-3.5,0.5*45) {45};
\node[scale=1] at (0,0.5*45) {-};

\node[scale=0.8] at (-3.5,0.5*50) {50};
\node[scale=1] at (0,0.5*50) {-};
\node[scale=0.8] at (0,0.5*56) {$\#Sol$};

\node[] at (405*0.14,0.5*45) {\textbf{H1}};

\node[red] at (405*0.14,0.5*12) {\textbf{H2}};

\end{tikzpicture}
\caption{H1 vs. H2 for $\#$Sol.}\label{fig13}
\end{minipage}
\begin{minipage}{0.325\linewidth}
\begin{tikzpicture}[x=0.7mm,y=1mm,scale=1]
\color{black}
\draw[->,>=latex] (0, 0) -- (60, 0) node[below]{};
\draw[->,>=latex] (0, 0) -- (0,32) node[left]{};

\draw plot[smooth,very thick,xscale=0.14,yscale=150]coordinates{(10,0)(20,0.02) (30,0.01) (40,0.03)(50,0.02)(60,0.01) (70,0.02) (80,0.02) (90,0.02) (100,0.03) (200,0.02) (300,0.02) (400,0.01)};

\draw[red] plot[smooth,very thick,xscale=0.14,yscale=150]coordinates{(10,0)(20,0.1) (30,0.03) (40,0.11)(50,0.2)(60,0.02) (70,0.07) (80,0.13) (90,0.01) (100,0.1) (200,0.04) (300,0.01) (400,0.03)};


\node[scale=0.8] at (0.14*20, -2.5) {20};
\node[scale=0.4] at (0.14*20, 0) {|};

\node[scale=0.4] at (0.14*40,0) {|};


\node[scale=0.4] at (0.14*60,0) {|};


\node[scale=0.4] at (0.14*80,0) {|};


\node[scale=0.8] at (0.14*100,-2.5) {100};
\node[scale=0.4] at (0.14*100,0) {|};

\node[scale=0.8] at (0.14*200,-2.5) {200};
\node[scale=0.4] at (0.14*200,0) {|};

\node[scale=0.8] at (0.14*300,-2.5) {300};
\node[scale=0.4] at (0.14*300,0) {|};

\node[scale=0.8] at (0.14*400,-2.5) {400};
\node[scale=0.4] at (0.14*400,0) {|};
\node[scale=0.8] at (0.14*435,-1) {$n$};

\node[scale=0.8] at (-5.5,150*0.1) {0.01};
\node[scale=1] at (0,150*0.1) {-};

\node[scale=1] at (0,150*0.11) {-};

\node[scale=0.8] at (-6,150*0.07) {0.007};
\node[scale=1] at (0,150*0.07) {-};

\node[scale=1] at (0,150*0.04) {-};

\node[scale=0.8] at (-6,150*0.13) {0.013};
\node[scale=1] at (0,150*0.13) {-};

\node[scale=0.8] at (-5.5,150*0.2) { 0.02};
\node[scale=1] at (0,150*0.2) {-};
\node[scale=0.8] at (0,150*0.22) { GAP};

\node[scale=0.8] at (-6,150*0.03) {0.003};
\node[scale=1] at (0,150*0.03) {-};

\node[scale=1] at (0,150*0.02) {-};

\node[scale=0.8] at (-5.5,150*0.01) {0.001};
\node[scale=1] at (0,150*0.01) {-};

\node[] at (420*0.14,150*0.01) {\textbf{H1}};

\node[red] at (415*0.14,150*0.04) {\textbf{H2}};

\end{tikzpicture}
\caption{H1 vs. H2 for GAP.}\label{fig14}
\end{minipage}
\end{figure}

\section{Conclusion}
We have considered in this paper the problem of scheduling  on batch processing identical machines. We have established some polynomial results for the problem of minimizing the makespan on a single and several machines and have presented Polynomial time algorithms to optimally solve the  problem when the  capacity of batch equal to two. We have presented a linear formulation and we study their performance. We also developed two heuristic approaches to solve the general problem. We conducted numerical experimental tests to study their performance by comparing them to the best solution value obtained by Cplex.
For future research, we can studied the complexity of the NP-hard problems for three values of processing times in the case of job multiplicity. One could also explore exact methods to solve the general case of the problem and compare their performance with the linear formulation proposed above.

%
%
%

\nocite{*}
\bibliographystyle{fundam}
\bibliography{biblio}
\end{document}